\newcommand{\meach}{\text{\textnormal{\textbf{each} }}}
\newcommand{\ita}[1]{{\it #1}}
\newcommand{\ti}[1]{{\tt #1}}
\newcommand{\mb}[1]{$\mathbf{\textit{#1}}$\,}
\newcommand{\NN}{\mathbb{N}}
\newcommand{\quark}[1]{{\mathcal K({#1})}}
\newcommand{\quarkRole}[2]{{\mathcal K_#1({#2})}}
\newcommand{\MC}{\textsc{MC}\xspace}
\newcommand{\MCsingle}{\textsc{MC-Single}\xspace}
\newcommand{\MCrecbi}{\textsc{MC-Rec-Bisection}\xspace}
\newcommand{\MCkmeans}{\textsc{MC-k-means}\xspace}
\newcommand{\TY}{\textsc{TY}\xspace}
\newcommand\mycommfont[1]{\textcolor{blue}{#1}}
\newtheorem{definition}{Definition}
\newtheorem{theorem}{Theorem}
  \providecommand\BibTeX{{%
    \normalfont B\kern-0.5em{\scshape i\kern-0.25em b}\kern-0.8em\TeX}}}
\begin{document}

%%
%% The "title" command has an optional parameter,
%% allowing the author to define a "short title" to be used in page headers.
%\title{The Name of the Title is Hope}

\title{Motif-driven Dense Subgraph Discovery\\ in Directed and Labeled Networks}

%%
%% The "author" command and its associated commands are used to define
%% the authors and their affiliations.
%% Of note is the shared affiliation of the first two authors, and the
%% "authornote" and "authornotemark" commands
%% used to denote shared contribution to the research.
%\author{Ben Trovato}
%\email{trovato@corporation.com}
%\affiliation{%
%  \institution{Institute for Clarity in Documentation}
%  \streetaddress{P.O. Box 1212}
%  \city{Dublin}
%  \state{Ohio}
%  \postcode{43017-6221}
%  \country{USA}
%}

\author{Ahmet Erdem Sar{\i}y\"{u}ce}
\affiliation{ 
  \institution{University at Buffalo}
  \city{Buffalo}
  \state{NY}
  \country{USA}
}
\email{erdem@buffalo.edu}

%%
%% By default, the full list of authors will be used in the page
%% headers. Often, this list is too long, and will overlap
%% other information printed in the page headers. This command allows
%% the author to define a more concise list
%% of authors' names for this purpose.
%\renewcommand{\shortauthors}{Trovato and Tobin, et al.}

%%
%% The abstract is a short summary of the work to be presented in the
%% article.
\begin{abstract}
Dense regions in networks are an indicator of interesting and unusual information.
However, most existing methods only consider simple, undirected, unweighted networks.
Complex networks in the real-world often have rich information though: edges are asymmetrical and nodes/edges have categorical and numerical attributes.
Finding dense subgraphs in such networks  in accordance with this rich information is an important problem with many applications.
Furthermore, most existing algorithms ignore the higher-order relationships (i.e., motifs) among the nodes.
Motifs are shown to be helpful for dense subgraph discovery but their wide spectrum in heterogeneous networks makes it challenging to utilize them effectively.
In this work, we propose quark decomposition framework to locate dense subgraphs that are rich with a given motif.
We focus on networks with directed edges and categorical attributes on nodes/edges.
For a given motif, our framework builds subgraphs, called quarks, in varying quality and with hierarchical relations.
Our framework is versatile, efficient, and extendible.
We discuss the limitations and practical instantiations of our framework as well as the role confusion problem that needs to be considered in directed networks.
We give an extensive evaluation of our framework in directed, signed-directed, and node-labeled networks.
We consider various motifs and evaluate the quark decomposition using several real-world networks.
Results show that quark decomposition performs better than the state-of-the-art techniques. Our framework is also practical and scalable to networks with up to 101M edges.
\end{abstract}

%%
%% The code below is generated by the tool at http://dl.acm.org/ccs.cfm.
%% Please copy and paste the code instead of the example below.
%%
%\begin{CCSXML}
%<ccs2012>
%   <concept>
%       <concept_id>10002951.10003260.10003282.10003292</concept_id>
%       <concept_desc>Information systems~Social networks</concept_desc>
%       <concept_significance>500</concept_significance>
%       </concept>
%   <concept>
%       <concept_id>10002950.10003624.10003633</concept_id>
%       <concept_desc>Mathematics of computing~Graph theory</concept_desc>
%       <concept_significance>500</concept_significance>
%       </concept>
% </ccs2012>
%\end{CCSXML}
%
%\ccsdesc[500]{Information systems~Social networks}
%\ccsdesc[500]{Mathematics of computing~Graph theory}
%
%

%%
%% Keywords. The author(s) should pick words that accurately describe
%% the work being presented. Separate the keywords with commas.
%\keywords{datasets, neural networks, gaze detection, text tagging}

\keywords{graph motif, $k$-core, $k$-truss, dense subgraph discovery}

%% A "teaser" image appears between the author and affiliation
%% information and the body of the document, and typically spans the
%% page.
%\begin{teaserfigure}
%  \includegraphics[width=\textwidth]{sampleteaser}
%  \caption{Seattle Mariners at Spring Training, 2010.}
%  \Description{Enjoying the baseball game from the third-base
%  seats. Ichiro Suzuki preparing to bat.}
%  \label{fig:teaser}
%\end{teaserfigure}

%%
%% This command processes the author and affiliation and title
%% information and builds the first part of the formatted document.

\maketitle

\section{Introduction}\label{sec:intro}

Dense regions in networks contain unusual and interesting information~\cite{Gleich12}.
Dense subgraph discovery is shown to be an effective analysis method in many applications across various domains~\cite{Lee10, Kumar99, Alvarez06, Fratkin06, Du09, Jin09, Gionis13}.
It is often a good and cheaper proxy for graph clustering because cohesive subgraphs in real-world networks exhibit good cuts~\cite{Lee10, Gleich12}.
However, most algorithms to find dense subgraphs are designed for simple, undirected, unweighted graphs.
In reality, a common characteristic of natural and engineered systems from various domains is that the nodes (entities) and edges (relationships) have rich information associated with them; i.e., networks are heterogeneous~\cite{sun2012mining}.
Relationships can be asymmetrical (one-way) and entities/relationships can be associated with categorical and numerical attributes; e.g., length of a road or gender of a person.
Finding the dense subgraphs while actively considering the rich information on nodes/edges has various applications, such as entity resolution~\cite{wang2016efficient} and link prediction~\cite{carranza2020higher}. 

Furthermore, most dense subgraph discovery algorithms are designed to capture only the first-order relationships.
Higher-order structures (i.e., motifs/graphlets) are shown to be the fundamental building blocks in the organization and dynamics of real-world networks such as social and neural networks~\cite{Milo02, Przulj07, SePiKo14, Pinar17, Ahmed16, Jain17}.
Discovering dense subgraphs that have a higher-order structure is imperative in the analysis of those networks.
In simple networks, motifs are used to find subgraphs with higher-order structure, which cannot be detected with edge-centric methods~\cite{Tsourakakis15, Sariyuce15}.
However, it is not clear how to find dense subgraphs with motifs in heterogeneous networks.
The spectrum of motifs in heterogeneous networks is wide due to the edge directions and node/edge attributes. 
Although this variety is particularly effective for the analysis as the structure and dynamics can vary with respect to the type of motif considered~\cite{Benson16, Babis17, sun2011pathsim}, handling the diverse nature of heterogeneous networks while watching for the motifs is a challenging problem.

In this work, we introduce the {\bf quark decomposition} framework to find motif-driven subgraphs in networks with directed edges and categorical attributes on nodes/edges.
Our framework builds subgraphs, called quarks, in varying quality and with hierarchical relations.
A $k$-quark is a motif-parameterized subgraph where each node/edge (or small motif) participates in a number of (larger) motifs. The parameter $k$ denotes the extent of participation and not an input: quark decomposition finds non-empty $k$-quarks for all $k$ values.
Our framework is inspired by the peeling approach in simple graphs, namely core, truss, and nucleus decompositions~\cite{Seidman83, Cohen08, Sariyuce15}, which first locate the outer sparse parts of the graph and then find the inner dense regions.
Given the practicality and effectiveness of the peeling techniques, we adapt them to the directed networks with categorical attributes in a principled way.
Note that it is beyond nontrivial to consider a generalization since edge directions and node/edge attributes create a wide and diverse spectrum of motifs (see~\cref{fig:dirmotifs} and~\cref{fig:signedmotifs} for examples).

Quark decomposition takes two motifs as parameters, $M$ and $N$ where $M \subset N$, and builds subgraphs where $M$s participate in many $N$s.
The parameterized formulation enables the discovery of diverse subgraphs and lets a trade-off between quality and practicality.
We assign density indicators for each motif $M$, called quark numbers, to denote how well $M$ is connected to its neighborhood, which are then used to build the $k$-quarks.
Quark numbers also relate the $k$-quarks with each other using containment relations; the quarks with larger $k$ are contained in the ones with smaller $k$.

We use quark decomposition in two broad classes of applications: (1) When the motif of interest is unknown and higher-order organization of the network is asked; (2) When the motif of interest is known and guides the quark decomposition.
For (1), we compare quarks obtained with various motifs and analyze the number and quality of resulting quarks in directed and signed-directed networks. In a word-association network, we show that overlapping quarks by the same motif as well as the quarks by different motifs capture diverse contexts of the words.
We also analyze the Florida Bay food web and show that quarks obtain consistently better results than the state-of-the-art algorithm for finding groups of compartments with respect to the ground-truth classifications.
For (2), we consider the task of finding gender-balanced communities in networks where genders are used as node labels. We focus on the college friendship networks that have low female ratios. We consider clique-based instantiations of the quark decomposition and choose gender-balanced triangles and four-cliques as the motif $N$. 
We observe consistently high female ratios when compared to the label-oblivious state-of-the-art methods.

\noindent Key contributions in this paper are summarized as follows:

\begin{compactitem} [\leftmargin=-.5in$\bullet$]
\item \textbf{Quark decomposition.} We propose a framework to find dense subgraphs according to a given motif in networks with directed edges and categorical attributes on nodes/edges.
Quark decomposition is versatile, efficient, and extendible.

\item \textbf{Limitations and role confusion problem.} We characterize the limitations and practical instantiations of quark decomposition to guide the selection of parameter motifs $M$ and $N$.
Presence of multiple orbits in some directed motifs results in subgraphs where a node/edge serves in multiple orbits. We call this role confusion and devise role-aware quark numbers as a remedy.

\item \textbf{Generic peeling algorithm for any motif.} We introduce a generic peeling algorithm that works for any motif pair $M, N$ to find the quark decomposition. Our algorithm is similar in spirit to the core/truss/nucleus decompositions and can enjoy the optimizations applicable for the peeling algorithms.

\item \textbf{Extensive evaluation on real-world networks.} We evaluate quark decomposition on three types of heterogeneous networks; directed, signed-directed, and node-labeled. We consider various motifs using several real-world networks.
Results show that quark decomposition outperforms the state-of-the-art techniques and is also practically scalable to networks with up to 101M edges.
\end{compactitem}

\vspace{-1ex}
\section{Related Work}\label{sec:rel}
Here we summarize the related works on motif-driven dense subgraphs in heterogeneous networks. Note that there are too many clustering, community detection/search works on heterogeneous networks~\cite{fang2020survey}, but our focus is limited to motif-based approaches that find dense subgraphs in directed and labeled networks.

\noindent {\bf Peeling approaches on simple networks.} Core decomposition is a simple but effective model to locate the seed regions where dense subgraphs can be found~\cite{Seidman83}.
It makes use of degrees to assign core numbers to nodes.
Regularizing the node degrees, that span to a large range, to the core numbers in a smaller range is the key and makes the peeling a fundamental building block in an array of applications~\cite{peng2014accelerating, Kitsak10, Malliaros16, Alvarez06, Carmi07}.
Higher-order variants of the peeling are introduced to take advantage of the triangles and small subgraphs.
Truss decomposition leverages triangles~\cite{Saito06, Cohen08, Huang14}, nucleus decomposition makes use of small cliques~\cite{Sariyuce15, Sariyuce-TWEB17}, and tip/wing decomposition utilizes butterflies ($2,2$-bicliques) in bipartite networks~\cite{Aksoy16, Sariyuce18} to find dense regions in a principled way.
Nucleus decomposition generalizes core and truss decompositions. It finds $k$-$(r,s)$ nucleus, defined as a subgraph where each $r$-clique is a part of $k$ number of $s$-cliques where $r<s$.
Those peeling approaches work on undirected simple networks. Here we work on directed networks with attributes on nodes/edges.
We compare our methods with nucleus decomposition to highlight the benefit of considering edge directions and node labels.

\noindent {\bf Cycle-, flow-trusses.} Regarding the adaptations of core and truss decompositions for directed graphs, Takaguchi and Yoshida~\cite{Takaguchi16} introduced cycle- and flow-trusses.
Their algorithms work on directed networks with respect to cycle and flow (\ita{acyclic}) motifs (see~\cref{fig:dirmotifs}) and rely on the occurrences of the cycle and flow motifs for each edge. However, they do not consider the bidirectional edges and handle each such edge as two separate unidirectional edges. We compare our work with cycle-flow trusses in~\cref{sec:expdirected}.

\noindent {\bf Motif-based densest subgraph and clique finding.} There are a few works in the literature that studies the motif-driven densest subgraph problem. For constant size cliques, which can be thought of as motifs in simple undirected graphs, Tsourakakis introduced the $k$-clique densest subgraph problem~\cite{Tsourakakis15} to generalize the classical densest subgraph discovery~\cite{Goldberg84} for $k$-cliques ($k>2$). Analogous to finding the subgraph with the maximum average degree, Tsourakakis proposed to find the subgraphs that have a maximum average triangle (or $k$-clique) count per node.
More recently, Fang et al. proposed exact and approximate algorithms for the same problem~\cite{fang2019efficient} and Hu et al. considered heterogeneous information networks with specific schemas to find maximal motif-cliques~\cite{hu2019discovering}.
Note that our problem setup is more general and we aim to find multiple subgraphs that are not necessarily perfect cliques but always significantly dense.

\noindent {\bf Higher-order motif clustering.} The motif-based graph clustering problem is studied from a spectral perspective in heterogeneous networks. Benson et al.~\cite{Benson16} introduced a nice generalized framework for clustering the networks based on the higher-order connectivity patterns. They defined the motif conductance as the ratio of the number of motifs cutting the border between two regions to the number of motif instance endpoints (i.e., nodes) in the subgraph or its complement, whichever is smaller. Since getting the optimal solution for motif conductance is NP-Hard, they proposed an approximate algorithm that finds the near-optimal cluster in a given network. Their method relies on the spectral clustering of motif adjacency matrix whose entry $i, j$ is the number of motifs where nodes $i$ and $j$ co-occur. The set of nodes in the spectral ordering that has the minimum conductance is reported as the optimal higher-order cluster. 
 Recursive bisection method iteratively finds the near-optimal clusters in the complement of the graph and k-means clustering on the motif adjacency matrix obtains a prespecified number of clusters.
Concurrently, Tsourakakis et al.~\cite{Babis17} proposed the same framework for motif-aware clustering and also introduced a random walk interpretation of the graph reweighting scheme which gives a principled approach to define the notion of conductance for other motifs.
More recently, Li et al. improved the motif-based clustering approach to handle the clustering of disconnected nodes~\cite{li2019edmot}.
Our framework differs from those approaches: we do not partition the graph but find dense subgraphs around nodes/edges.
We give an extensive comparison against the motif clustering~\cite{Benson16}  in our experiments.

\begin{figure}[!t]
\centering
\includegraphics[width=0.8\linewidth]{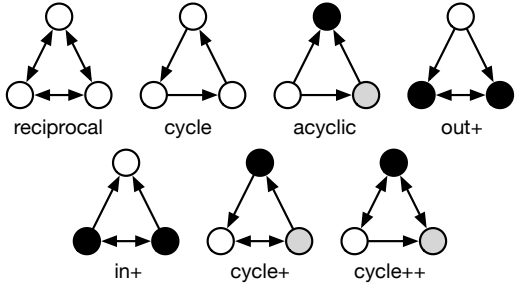}
\vspace{-3ex}
\caption{\small Directed triangle motifs with three nodes and three edges, as named in~\cite{Sesh16}. Automorphism orbit (or just orbit) of a node in a given motif is the set of other nodes that have the same topological connectivity patterns. In each triangle, orbits of the nodes are denoted with colors (i.e., nodes in the same color have the same orbit).}
\label{fig:dirmotifs}
%\vspace{-4ex}
\end{figure}

\section{Preliminaries}\label{sec:prelim}

\noindent {\bf Motifs and hypergraphs.} We define motif $M=(V_M, E_M)$ as an induced directed subgraph with node and edge sets $V_M$, $E_M$. Each $v \in V_M$ and $e \in E_M$ can have categorical (non-numeric) attributes, defined by $f:V_M, E_M\rightarrow \mathcal{N}$. 
A motif $M$ is a subset of motif $N$ iff
\begin{compactitem} [\leftmargin=-.5in$\bullet$]
    \item  $V_M \subseteq V_N$ where there is one and only one $v' \in V_N$ for each $v \in V_M$ such that $f(v)=f(v')$.
    \item $E_M \subseteq E_N$ where there is one and only one $e' \in E_N$ for each $e \in E_M$ such that $f(e)=f(e')$.
\end{compactitem}

\noindent We use the language of \emph{hypergraphs} to define the involvements of small motifs in the larger motifs.
A hypergraph $H = (V,E)$ consists of the node set $V$ and hyperedge set $E$, where a hyperedge $e \in E$ is simply a subset of $V$ (in standard graphs, each hyperedge has two nodes).
Consider a hypergraph $H=(V,E)$;

\begin{compactitem}[\leftmargin=-.15in$\bullet$]
    \item $u, v \in V$ are neighbors if there is a hyperedge $e \in E$ that contains $u$ and $v$.
    \item The degree of a node $v \in V$, denoted by $d(v)$, is the number of hyperedges that contain $v$.
    \item The size of a hyperedge $e \in E$, denoted by $s(e)$, is the number of nodes in it.
    \item Two nodes $u$ and $v$ are connected if there exists a sequence of hyperedges
    $e_1, e_2, \ldots, e_\ell \in E$ such that $u \in e_1$, $v \in e_\ell$, and $\forall i < \ell$,
    $e_i \cap e_{i+1} \neq \emptyset$.
    \item $H$ is connected if all pairs of nodes are connected.
\end{compactitem}
 
\begin{definition} \label{def:induced}
Let $S \subseteq V$.
The \textbf{induced hypergraph} $H|_S$ has node set $S$ and contains every hyperedge of $H$ completely contained in $S$, i.e., $\forall e \in H|_S$, iff $v \in e$ then $v \in S$.
\begin{compactitem}[\leftmargin=-.15in$\bullet$]
    \item The degree of node $v$\,$\in$\,$H|_S$ is denoted by $d_S(v)$ (or $d(v)$ when clear).
    \item The minimum degree in $H|_S$ is denoted by $\delta_S$.
\end{compactitem}
\end{definition} 

\noindent Induced hypergraph is also known as section hypergraph.

\noindent {\bf Dense subgraphs.} 
We call a subgraph dense if it has many motifs (also called {\bf motif-based or -driven dense subgraph}). Formally, we use {\bf average motif degree} to quantify the density of a subgraph with respect to a given motif.

\begin{definition} \label{def:amd}
For a subgraph $S$ and a motif $N$, the {\bf average motif degree} of $S$ is the number of instances of $N$ in $S$ divided by the number of nodes in $S$.
\end{definition} 

\section{Quark decomposition framework}\label{sec:frame}

We first define the {\bf motif hypergraph}.

\begin{definition} \label{def:motifhypergraph} 
Given a graph $G$ and template motifs $M$ and $N$ s.t. $M \subset N$.
Let $\{M\}$ and $\{N\}$ be the set of instances of $M$ and $N$ in $G$, respectively, and $f, g$ be bijective functions.
{\bf Motif hypergraph} $H_G(M, N)=(V_G, E_G)$ is a hypergraph constructed as follows:
\begin{compactitem}[\leftmargin=-.15in$\bullet$]
\item Each instance of $M\in G$ forms a node $u \in V_G$ by $f$:$\{M\}\rightarrow {V_G}$.
\item Each instance of $N\in G$ forms a hyperedge $e \in E_G$ by $g$:$\{N\}\rightarrow {E_G}$.
\item Iff $M\subset N$ in $G$, then $f(M)\in g(N)$ in $H_G$.
\end{compactitem}
\end{definition} 

\noindent Note that $H_G(M,N)$ is a $t$-uniform hypergraph (i.e., $s(e)=t~~\forall e\in E_G$) where $t$ is the number of occurrences of $M$ in $N$.
We also refer to the degree of each node $u \in H_G$ as the {\bf motif degree}, denoted by $d_{H_G}(u)$ (or $d(u)$ when $H_G$ is obvious).

\noindent We now introduce the notion of \emph{k-quark} subgraph.

\begin{definition} \label{def:quark}
Given a graph $G$ and template motifs $M$, $N$ such that $M \subset N$, say $H_G(M,N)$ is the motif hypergraph defined as above.
\begin{compactitem}[\leftmargin=-.15in$\bullet$]
    \item For any $k \in \NN$, a \textbf{\mb{k}-quark} of $H_G(M,N)$ is a \emph{connected} and \emph{maximal} induced sub-hypergraph $H|_S$ such that $\delta_S \geq k$.
    \item For a node $u$ in $H_G(M,N)$ (corresponding to an instance of motif $M$ in $G$), the \textbf{quark number} of $u$, denoted by $\quark{u}$, is the largest value of $k$ such that $u$ belongs to a non-empty $k$-quark.
\end{compactitem}    
\end{definition}

\noindent We also refer to $k$-quark as quark when $k$ is irrelevant or clear.

\begin{definition} \label{def:hier}
$k$-quarks form a hierarchy by containment. 
\begin{compactitem}[\leftmargin=-.15in$\bullet$]
\item Let $S$ be a $k$-quark and $T$ be a $k'$-quark such that $k'<k$ and $S \subset T$. $S$ is the \textbf{child} of $T$ (and $T$ is the {\bf parent} of $S$) if there is no $\bar{k}$-quark $U$ such that $k'<\bar{k}<k$ and $S \subset U \subset T$.
\item A $k$-quark is {\bf leaf} (childless) if there is no $k^+$-quark in it s.t. $k^+>k$.
\item {\bf Maximum quark number} of a graph is the largest $k$ for which there is a non-empty $k$-quark.
\item {\bf Maximum $k$-quark} is a quark where $k$ is the maximum quark number in the graph.
\end{compactitem}    
\end{definition}

\textit{Quark decomposition} is the process of finding the quark numbers and $k$-quarks for a given pair of motifs $M, N$ in a graph $G$. Leaf $k$-quarks are the locally optimal subgraphs; they are surrounded by less dense quarks (with lower $k$ values) and hence often contain the most interesting information. 

If $G$ is a simple undirected graph where $M$ is $r$-clique and $N$ is $s$-clique ($r<s$), then $k$-quark is nothing but a $k$-$(r,s)$ nucleus~\cite{Sariyuce15, Sariyuce-TWEB17}.
If $G$ is a simple undirected bipartite graph where $M$ is edge and $N$ is $2,2$-biclique, then $k$-quark reduces to be a $k$-wing~\cite{Sariyuce18}.
In the quark decomposition, each $M$ instance is given a quark number and the $k$-quarks along with the hierarchical relationships among them can be constructed accordingly.
{\bf Note that, $N$ is the motif of interest for which dense regions are to be found.}
Motif $M$ can be any subset of $N$ but it should satisfy some requirements such that non-trivial $k$-quarks can be obtained (more details given in~\cref{sec:limits}).
When $M$ is edge or a larger motif, the resulting $k$-quarks may overlap with each other because quarks are defined as a group of $M$s.
This is useful since the overlapping communities can better capture the network organization~\cite{Xie13}.

\begin{figure}[!t]
\includegraphics[width=\linewidth]{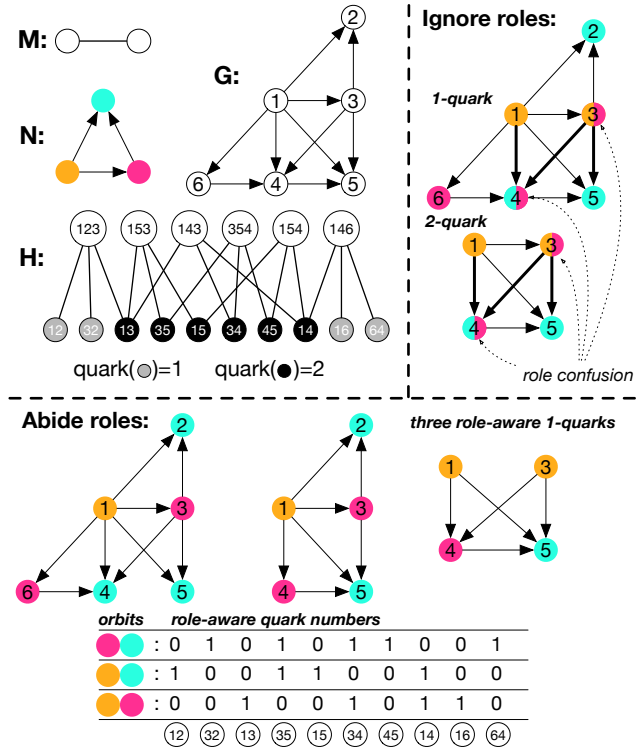}
\vspace{-5ex}
\caption{\small We construct the motif hypergraph \mb{H} with respect to the motifs \mb{M} (edge) and \mb{N} (acyclic) on a toy graph \mb{G} (in top-left). Each orbit in \mb{N} is shown by a different color. We create a node for each motif \mb{M} and a hyperedge for each motif \mb{N} to get the motif hypergraph \mb{H}. Ids of triangles and edges are the union of the nodes in each. Quark numbers are denoted with gray for 1 and black for 2. 1-quark and 2-quark are shown in top-right. Role confusion occurs for nodes 3 and 4 in both quarks. Role-aware quarks and quark numbers are shown at the bottom (See~\cref{sec:example} for more details).}
\label{fig:role}
%\vspace{-4ex}
\end{figure}

\vspace{-1ex}
\subsection{Limitations \& practical instantiations}\label{sec:limits}

It is important to consider the necessary and sufficient conditions for the $M$ and $N$ in~\cref{def:quark} so that the $k$-quarks are non-trivial. For instance, if $M$ has only one occurrence in $N$, the size of each hyperedge in the motif hypergraph becomes one, thus it is not possible to consider any connectivity among $M$s.
This is related to the automorphism orbits~\cite{Przulj07, Sarajlic16}.
Automorphism orbit (or just orbit) of a node in a given motif is the set of other nodes that have the same topological connectivity patterns.
For directed triangle motifs, shown in~\cref{fig:dirmotifs}, orbits of the nodes are denoted with colors.
For instance, \ita{in+} has two orbits; the first has one node with two incoming edges (white node) and the second has two nodes where each has one outgoing and one bidirectional edge (black nodes).
Note that automorphism orbits are defined only for the nodes.
In our framework, we can also consider an edge or a larger structure as $M$ in~\cref{def:quark}, thus automorphism orbits of such structures need to be taken into account.
To define the orbits of any $M$, we consider the ordered list of node orbits in it.

Enforcing the automorphism orbits restricts the use of any motif since~\cref{def:quark} requires that at least one orbit should have multiple instances in $N$.
E.g., when $M$ is node or edge; \ita{acyclic}, \ita{cycle+}, and \ita{cycle++} motifs have three different orbits, i.e., no orbit has multiple members. Thus those cannot be considered as $N$.
To remedy this problem, \textbf{we use a vanilla motif $M$, which has only one orbit}. A vanilla motif has no  node/edge labels and its edges are directionless\footnote{we do not say bidirectional to avoid any confusion}. Any motif $N \supset M$ will contain multiple instances of $M$, thus the size of hyperedges will be greater than one.
\cref{fig:role} gives an example when $M$ is a vanilla edge and $N$ is \ita{acyclic}.

\subsubsection{\bf Role confusion problem}\label{sec:roleconf}

Another problem in $k$-quarks is the conflation of the orbits for $M$s.
An instance of $M$ can be a part of multiple instances of $N$.
Furthermore, orbit of an $M$ instance in one $N$ instance can be different than its orbit in another $N$ instance. E.g., if $M$ is node and $N$ is \ita{out+}, a node may appear as white node in one \ita{out+} instance while being black node in another \ita{out+} instance (see~\cref{fig:dirmotifs}).
We call this {\it the role confusion problem}.
Note that choosing $M$ as a vanilla motif does not help with the role confusion problem.
For example, consider \ita{cycle+} with nodes $A, B, C$ and edges $A$$\rightarrow$$B$, $B$$\rightarrow$$C$, $A$$\leftrightarrow$$C$.
This can be a structure in Twitter network; e.g., A is a grad student, C is her advisor (a professor), and B is a junior faculty working in the same field---B is an interesting person for A but not for C, professor C is a well-known person followed by many and she follows A since she is A's advisor.
In the motif-driven subgraphs for \ita{cycle+}, each node has ideally a single role; i.e., a grad student is better characterized as the node A in all the \ita{cycle+} instances she participates in. 
The solution is to construct the subgraphs in a way that abide by the orbits.
To do that, we define {\bf role-aware quark numbers} for each $M$.
If $M$ has $b$ orbits in $N$, each $M$ will have $b$ role-aware quark numbers, one for each orbit.

\begin{definition} \label{def:RAquark}
Given a graph $G$ and motifs (vanilla) $M$ and $N$ (s.t. $M \subset N$), let $H_G(M,N)$ be the motif hypergraph as defined in~\cref{def:motifhypergraph}. Let $b$ be the number of orbits of $M$ in $N$.
\begin{compactitem}[\leftmargin=-.15in$\bullet$]
    \item \textbf{Orbit degree} of an $M$ instance is the number of instances of $N$ that contain it such that the orbit of the $M$ instance in each $N$ instance is the same. Each $M$ instance has $b$ orbit degrees.
    \item For any $k \in \NN$, a \textbf{role-aware \mb{k}-quark} of $H_G(M,N)$ is a \emph{connected} and \emph{maximal} induced sub-hypergraph $H|_S$ such that each $M$ instance has one orbit degree of at least $k$.    
    \item For a node $u$ in $H_G(M,N)$ (corresponding to an instance of motif $M$ in $G$), the \textbf{role-aware quark numbers} of $u$ are $b$ numbers, denoted by $\quarkRole{i}{u}$ for $1\le i \le b$.
    $\quarkRole{i}{u}$ is the largest value of $k$ such that the node $u$ belongs to a non-empty role-aware $k$-quark where its orbit is $i$.
\end{compactitem}    
\end{definition}

In a role-aware $k$-quark, each $M$ instance participates in at least $k$ $N$ instances and the orbit of the $M$ instance in each of those participations is the same (note that different $M$ instances can have different orbits in the quark).
Role-aware quark numbers describe the extent of participation as a particular orbit while the quark number indicates the extent of participation as {\it any} orbit.
When $M$ is node and $N$ is \ita{cycle} or \ita{reciprocal} (in~\cref{fig:dirmotifs}), there is no role confusion since there is only one node orbit in each. For all the other directed triangles, there is a role confusion.
When $M$ is edge, we always consider the orbits of the two nodes in it.
In the context of directed triangles, one important point is the distinction between unidirectional and bidirectional edges.
In this work, we consider the bidirectional edge as an atomic entity; i.e., not combination of two unidirectional edges, due to the fact that a symmetrical relation has a different semantic than two asymmetrical ones---in a sense we only consider the induced edges.
For instance, \ita{in+} and \ita{out+} do not create any role confusion when $M$ is edge, because the unidirectional edge between the white and black nodes in \ita{in+} (or \ita{out+}) cannot serve as a bidirectional edge (between two black nodes) in an adjacent \ita{in+} (or \ita{out+}) motif (see~\cref{fig:dirmotifs}).

\subsubsection{\bf Example}\label{sec:example}
\cref{fig:role} illustrates an example on a toy graph.
We choose the vanilla edge as the motif $M$ and \ita{acyclic} triangle as the motif $N$, shown in the top-left.
We denote the node orbits in acyclic with different colors; orange shows the node with two outgoing edges, blue is for the node with two incoming edges, and red denotes the node with one incoming and one outgoing edge.
Motif hypergraph, $H$, with respect to $G, M, N$ is shown next. 
As described in~\cref{def:motifhypergraph}, we create a node for each vanilla edge (bottom set in $H$) and a hyperedge for each acyclic (top set in $H$).
Id of each acyclic and edge in $H$ is formed by the concatenation of the node ids in it, e.g., $12$ denotes the edge from $1$ to $2$.
Each hyperedge is connected to three nodes since there are three edges in an acyclic---making $H$ a $3$-uniform hypergraph.
The degree of each node in $H$ is called the motif degree, e.g., it is $2$ for edge $4$-$5$.
Quark numbers of the edges are denoted by gray ($\mathcal{K}=1$) and black ($\mathcal{K}=2$) in the bottom set of $H$; e.g., quark number of edge $4$-$5$ is $2$.
Based on those quark numbers, we construct the quarks in the top-right.
Two quarks are created; a $1$-quark, corresponding to the entire graph $G$, and a $2$-quark with nodes $1, 3, 4,$ and $5$.
We observe role confusions for the thick edges in those quarks, which in turn implies the role confusions on nodes.
Regarding the edges, $1$-$4$, $3$-$4$, and $3$-$5$ have role confusion in both quarks; e.g., in 1-quark, $1$-$4$ connects orange to blue in $1$-$4$-$6$ acyclic but it links orange to red in $1$-$5$-$4$ acyclic.
For the nodes, $3$ and $4$ in both quarks have role confusions, e.g., node $3$ is red in the $1$-$3$-$5$ acyclic (and in $1$-$3$-$4$) while being orange in the $3$-$4$-$5$ acyclic.
We give the role-aware quarks and quark numbers of the edges at the bottom.
If we construct the role-aware quarks by abiding the orbits, we get three role-aware 1-quarks.
Note that the only edges that have multiple non-zero $\mathcal{K}_i$ are the ones that have role confusion in the top-right.

\subsection{Algorithms}\label{sec:alg}

Here we discuss our peeling algorithms, first for quark decomposition (\cref{def:quark}) and then for role-aware quarks (\cref{def:RAquark}).

\subsubsection{\bf Quark decomposition}\label{sec:quarkalg}

\noindent Algorithm~\ref{alg:quark} outlines the quark decomposition. Here we assume the motif $M$ in~\cref{def:quark} to be node or edge for simplicity. Note that larger structures can be considered as well. Also, we avoid constructing the actual hypergraph of motifs since it requires enumerating all the $N$s which will have a significant space cost. Instead, we discover those motifs for each node/edge as needed, similar to the space-efficient approach in nucleus decomposition~\cite{Sariyuce15}. There are three different phases in the quark decomposition; motif degree counting, peeling to find quark numbers, and constructing the subgraphs and the hierarchy.

\noindent \textbf{Motif degree counting.} Line~\ref{ln:one} corresponds to counting the occurrences of $N$ instances per each node or edge.
There are some studies~\cite{Ine18, Pinar17, Yaveroglu14} that leverage certain commonalities among motifs to avoid redundant computations.
Adapting those studies for per node/edge motif counting to improve runtime is possible but out of scope for this work.
Also, simultaneously counting the motif degrees for \textit{multiple motifs} would speed up the workflow when the motif of interest is unknown to the user and multiple options need to be investigated.
Note that if the label set of the input motif $N$ is smaller than the label set of the input graph $G$, we can filter the graph to only keep the label set of $N$, e.g., if a triangle of female students is $N$ for an undirected graph where genders are the node labels, only the induced graph of female nodes can be considered.

\noindent \textbf{Peeling to find quark numbers.} Lines~\ref{ln:three} to~\ref{ln:ten} assigns a quark number for each node/edge.
The classical core decomposition implementation~\cite{BaZa03} makes use of the bucket data structure to keep track of the nodes with the minimum degree.
We also use this approach to watch the nodes/edges with the minimum motif degree.
Initially, all the nodes/edges are marked as unprocessed (line~\ref{ln:two}), which will come handy to ensure each $N$ instance is processed only once.
In each iteration, an unprocessed node/edge with the minimum motif degree is chosen (line~\ref{ln:three}).
The motif degree of the chosen node/edge is set to be its quark number ($\mathcal{K}$) (line~\ref{ln:four}).
Then, the $N$ instances that contain the chosen node/edge are found and processed in lines~\ref{ln:five} to~\ref{ln:nine}.
In each such instance of $N$, we make sure that the other nodes/edges (the neighbors of the chosen node/edge) are unprocessed (line~\ref{ln:six}); this is done to ensure that each $N$ instance is examined only once.
Then, the neighbor nodes/edges in each $N$ instance are checked (line~\ref{ln:seven}) and their motif degree is decremented if larger than the motif degree of the chosen node/edge (lines~\ref{ln:eight} and~\ref{ln:nine}).
At the end of the iteration, the chosen node/edge is marked as processed (line~\ref{ln:ten}).
For any $M, N$ motif pair, two basic procedures are necessary and sufficient to instantiate the quark decomposition;

\begin{compactitem}[\leftmargin=-.15in$\bullet$]
    \item Finding the $N$ instances that contain a given $M$ instance (line~\ref{ln:five}),
    \item Finding the $M$ instances in a given $N$ instance (lines~\ref{ln:six} and~\ref{ln:seven}).
\end{compactitem}

\noindent \textbf{Constructing the subgraphs and the hierarchy.} It is also possible to construct the subgraphs and build hierarchy among $k$-quarks during the peeling operation, as noted in magenta lines after lines~\ref{ln:six} and~\ref{ln:ten}.
Subgraph and hierarchy construction during the peeling process is introduced in~\cite{Sariyuce-VLDB16} and can be adapted to the quark decomposition.
In line~\ref{ln:six}, the nodes/edges that are in the same $N$ instance with the chosen node/edge are checked.
If those nodes/edges are already processed (i.e., assigned a quark number), we can build non-maximal $k$-quarks during the peeling process.
At the end, those non-maximal $k$-quarks are converted to the real (maximal) $k$-quarks with a light-weight post-processing that uses the union-find data structure (following line~\ref{ln:ten}).

\begin{algorithm}[!t]
\small
\DontPrintSemicolon
\caption{\textsc{Quark Decomposition} ($G\,(V,E)$, motif $N$)}
\label{alg:quark}
Compute $d(u)$ (motif deg.) $\forall~u \in V$ ~\tcp{\mycommfont{or $d(e)$ $\forall~e\in E$}}\label{ln:one}
Mark every $u$ {\mycommfont{(or $e$)}} as unprocessed\label{ln:two}\;
\For{\meach unprocessed $u$ {\mycommfont{(or $e$)}} with minimum degree $d$}{\label{ln:three}
	$\mathcal{K}(u) \leftarrow d(u)$~\tcp{\mycommfont{or $\mathcal{K}(e) \leftarrow d(e)$}}\label{ln:four}
	\For{\meach motif $N$ s.t.~~$u \subset N$ {\mycommfont{(or $e \subset N$)}}}{\label{ln:five}
    		\lIf{any node $v \in N$ is processed {\mycommfont{(or edge)}}}{continue}\label{ln:six}
			\tcp{\textcolor{magenta}{can also find subgraphs \& hierarchy}}
    		\For{\meach node $v~(\neq u) \subset N$ {\mycommfont{(or edge $f~(\neq e)$)}}} {\label{ln:seven}
	    		\If{$d(v) > d(u)$ {\mycommfont{(or $d(f) > d(e)$)}}}{\label{ln:eight} 
    				$d(v) \leftarrow d(v) - 1$~\tcp{\mycommfont{or $d(f) \leftarrow d(f) - 1$}}\label{ln:nine}
	    		}
    		}
    	}
  	Mark $u$ {\mycommfont{(or $e$)}} as processed\label{ln:ten}\;
  }
  \tcp{\textcolor{magenta}{Optional post-processing to build the hierarchy}}
  \Return array $\mathcal{K}(\cdot)$ \label{ln:eleven}\;
\end{algorithm}

\begin{theorem}
Given a graph $G=(V,E)$ and motif $N$, Algorithm~\ref{alg:quark} finds the quark numbers, $\quark{\cdot}$, of all $u \in V$ (or $e \in E$).
\end{theorem}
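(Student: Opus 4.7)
The plan is to adapt the classical correctness proof for core-decomposition peeling to the motif hypergraph $H_G(M,N)$ of~\cref{def:motifhypergraph}. Writing $u_1, u_2, \ldots, u_n$ for the peeling order produced by Algorithm~\ref{alg:quark} and $d_i$ for the value of $d(u_i)$ at the moment $u_i$ is picked on line~\ref{ln:three}, the algorithm assigns $\mathcal{K}(u_i) = d_i$, so correctness amounts to matching $d_i$ with the quark number of $u_i$ in the sense of~\cref{def:quark}. I would prove the two directions separately: exhibit a $d_i$-quark containing $u_i$, and rule out any $(d_i+1)$-quark containing it.

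Two auxiliary lemmas set the stage. The first is the monotonicity $d_1 \leq d_2 \leq \cdots \leq d_n$, which is exactly where the strict-inequality guard $d(v) > d(u)$ on line~\ref{ln:eight} earns its keep: after $u_i$ is peeled, every still-unprocessed neighbor with tracked degree above $d_i$ drops to at least $d_i$, while those at or below $d_i$ are left untouched, so the next minimum-degree pick cannot fall below $d_i$. The second, by a routine induction on the peeling step, is that the tracked $d(u)$ of any unprocessed $u$ is at least the true degree of $u$ in the induced sub-hypergraph on the currently unprocessed set $U$.

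The $(d_i+1)$-quark exclusion is the cleaner of the two directions. If $u_i$ lay in some $(d_i+1)$-quark $Q$, set $j := \min\{k : u_k \in Q\}$; then $Q \subseteq U_j$ at time $j$, and the second auxiliary lemma gives $d_j \geq \deg_Q(u_j) \geq d_i + 1$. Combined with $j \leq i$ and monotonicity, this produces the contradiction $d_i + 1 \leq d_j \leq d_i$.

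Exhibiting a $d_i$-quark containing $u_i$ is the step I expect to be the main obstacle, because the natural candidate --- the sub-hypergraph on the nodes still unprocessed at time $i$ --- can fail to have minimum degree $d_i$: line~\ref{ln:eight} deliberately refuses to decrement neighbors whose tracked degree already equals $d_i$, so their true degree in $H|_{U_i}$ may dip below $d_i$ (already visible on a $4$-cycle with vanilla $M$). The remedy is to take as witness the connected component of $u_i$ in $H|_{S_{d_i}}$, where $S_{d_i} := \{u : \mathcal{K}(u) \geq d_i\}$, thereby including already-peeled but high-$\mathcal{K}$ nodes. The core bookkeeping is that every hyperedge still ``surviving'' for $u_j$ at its peel time (i.e., contributing to $d_j$) has all its other members in $S_{d_j} \subseteq S_{d_i}$: unpeeled members end up in $S_{d_j}$ by monotonicity, while any earlier-peeled member $w$ must satisfy $d_w \geq d_j$ (otherwise line~\ref{ln:eight} would have decremented $u_j$'s tracked degree, contradicting survival). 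Each $u_j \in S_{d_i}$ therefore has at least $d_j \geq d_i$ neighbors inside $H|_{S_{d_i}}$, so the connected component of $u_i$ there is a connected induced sub-hypergraph with $\delta \geq d_i$, and extending it maximally yields the sought $d_i$-quark.
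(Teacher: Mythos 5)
Your proof is correct, and it shares the paper's two-direction skeleton: exhibit a quark at the assigned value and rule out a quark at any higher value. Where you diverge is in how each direction is discharged, and your route is the more classical and more careful one. The paper's existence step builds its witness at the instant $u$ is peeled, by a traversal that admits nodes whose quark number is exactly $k$ or whose current motif degree is exactly $k$, and its exclusion step is the one-line claim that membership in a $k^+$-quark would force $\quark{u}=k^+$; neither is spelled out via invariants of the algorithm. You instead prove the two standard lemmas behind core-style peeling, in the spirit of~\cite{BaZa03} --- monotonicity of the peel values (precisely the service of the guard on line~\ref{ln:eight}) and the fact that the tracked degree dominates the true degree on the unprocessed induced sub-hypergraph (which rests on line~\ref{ln:six} charging each hyperedge only once) --- then run the minimal-peel-index argument for exclusion and take as witness the component of $u_i$ in $H|_{S_{d_i}}$ with $S_{d_i}=\{u:\mathcal{K}(u)\ge d_i\}$. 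Your witness is in fact the more robust choice, since it automatically includes nodes of larger quark number that the $d_i$-quark may need, something the paper's equal-to-$k$ traversal glosses over. One small patch: in your surviving-hyperedge bookkeeping, the justification ``otherwise line~\ref{ln:eight} would have decremented $u_j$'s tracked degree'' literally applies only to the \emph{first}-peeled member of that hyperedge, because line~\ref{ln:six} prevents any later-peeled member from triggering a decrement at all; the claim still holds, since survival forces the first-peeled member's peel value to be at least $u_j$'s tracked degree at that moment, hence at least $d_j$, and every other earlier-peeled member is peeled after it, so your monotonicity lemma lifts its value to at least $d_j$ as well.
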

%\vspace{-3ex}
\begin{proof}
We give the proof for the node case without loss of generality (i.e., it is similar for the edge).
As noted in~\cref{sec:prelim}, nodes $u, v \in V$ are neighbors if they participate in a common motif $N$.
$\quark{u}=k$ indicates that there are at least $k$ instances of $N$ which contain $u$ and in each such $N$, $u$ has a neighbor node $v$ s.t. $\quark{v} \ge \quark{u}$.
This is enforced by the lines~\ref{ln:eight}-\ref{ln:nine}, where the motif degree a neighbor node is decreased if it is larger than the quark number assigned at that step.
In other words, any neighbor node with a smaller quark number does not contribute to the quark number of the node of interest.
If Algorithm~\ref{alg:quark} finds $\quark{u}=k$ for a node $u \in V$, then by~\cref{def:quark}, we need to show that (i) $\exists$ a $k$-quark $G' \ni u$, (ii) $\nexists$ a $k^+$-quark $G' \ni u$ ($k^+ > k$).\\
(i) Once $\quark{u}=k$ is found in Algorithm~\ref{alg:quark}, we stop and construct an induced subgraph $G' \subset G$ by traversal as follows.
Initially, $G'$ has only $u$.
In each step, we visit a node $v \in V$ s.t. $v$ co-participates in some $N$ instance with a node from $G'$.
If $\quark{v}=k$ or if it is unassigned but its current motif degree is equal to $k$, we add $v$ to $G'$.
We continue the traversal until no such node $v$ can be found.
At the end, $G'$ is a $k$-quark since (1) each node participates in $\ge k$ motifs, because the nodes are processed in the non-decreasing order of their motif degrees, (2) all the nodes are connected to each other with motifs due to the motif-based traversal, and (3) $G'$ is maximal since it is the largest subgraph that can be found by the traversal.\\
(ii) $u$ cannot be in a $k^+$-quark. Assume it is. Then it should take part in at least $k^+$ motifs and each motif contains a neighbor node with motif degree of at least $k^+$. But, this implies that $\quark{u}=k^+$ (by~\cref{def:quark}), contradiction.
\end{proof}
%\vspace{-2ex}

\noindent \textbf{Time and space complexity.} Algorithm~\ref{alg:quark} has $O(\sum_{v\in V} d(v)^{|V_N|-1})$ complexity when $M$ is node or edge ($V_N$ is the node set of $N$).
The space complexity is $O(|E|)$ when $M$ is node/edge.
Instead of explicitly building the hypergraph of $M$s and $N$s in $G$, we only build the adjacency lists when required.
Since $N$s are not stored, space complexity is bounded by $O(|E|)$.
We find all motifs containing the node/edge of interest only when that node/edge is processed.
Each node/edge is processed at most once.
When $M$ is node, we can find all the $N$s containing a node by looking at all $(|V_N|-1)$-tuples in each of the neighborhoods of the node.
This takes at most $\sum_{v \in V} d(v)^{|V_N|-1}$.
Likewise, when $M$ is edge, we consider $|V_N|-2$ tuples in each edge neighborhood and total time is $\sum_{e \in E} \sum_{v \in e} d(v)^{|V_N|-2} $ $= \sum_{v \in V} \sum_{e \ni v} d(v)^{|V_N|-2} $ $= \sum_v d(v)^{|V_N|-1}$.

\begin{algorithm}[!t]
\small
\DontPrintSemicolon
\caption{\textsc{Role-aware Quark Dec.}\,($G\,(V,E)$,\,motif\,$N$)}
\label{alg:RAquark}
Let $B$ be the set of orbits that a node/edge has in $N$\label{ln:RAone}\;
Compute\,$d_o(u)$\,(orbit\,deg.)\,$\forall$ orbits $o$=$1,...,|B|$,\,$\forall\,u$$\in$$V$\,\tcp{\mycommfont{or\,$d_o(e)$}}\label{ln:RAtwo}
Mark every tuple $(u, o)$ as unprocessed for $o$=$1,...,|B|$~\tcp{\mycommfont{or $(e,o)$}}\label{ln:RAthree}
\For{\meach unprocessed  $(u, a)$ {\mycommfont{(or $(e,a)$)}} with min. orbit degree $d_a$}{\label{ln:RAfour}
	$\mathcal{K}_a(u) \leftarrow d_a(u)$~\tcp{{\mycommfont{or $\mathcal{K}_a(e) \leftarrow d_a(e)$}}}\label{ln:RAfive}
	\For{\meach motif $N$ s.t.~~$u \subset N$ {\mycommfont{(or $e \subset N$)}}}{ \label{ln:RAsix}
		Let\,$v$($\ne$$u$)\,be\,a\,node\,in\,$N$,\,$b$\,be\,its\,orbit\,\mycommfont{(or\,$f$($\ne\,e$)\,is\,an\,edge)}\label{ln:RAseven}\;
    		\lIf{any\,tuple\,$(v, b)\,\in\,N$\,is\,processed\,{\mycommfont{(or\,$(e,b)$)}}}{cont.}\label{ln:RAeight}
    		\For{\meach node $v~(\neq u) \subset N$ {\mycommfont{(or edge $f~(\neq e)$)}}} {\label{ln:RAnine}
			Let $b$ be the orbit of $v$ {\mycommfont{(or $f$)}} in $N$\label{ln:RAten}\;
	    		\If{$d_b(v) > d_a(u)$ {\mycommfont{(or $d_b(f) > d_a(e)$)}}}{\label{ln:RAeleven} 
    				$d_b(v) \leftarrow d_b(v) - 1$~\tcp{\mycommfont{or $d_b(f) \leftarrow d_b(f) - 1$}}\label{ln:RAtwelve}
	    		}
    		}
    	}
  	Mark $(u,a)$ {\mycommfont{(or $(e,a)$)}} as processed\label{ln:RAthirteen}\;
  }
  \Return arrays $\mathcal{K}_1(\cdot), ..., \mathcal{K}_{|B|}(\cdot)$ \label{ln:RAfourteen}\;
\end{algorithm}

\subsubsection{\bf Role-aware quark decomposition}\label{sec:quarkalg}
Algorithm~\ref{alg:RAquark} outlines the role-aware quark decomposition; again, for simplicity, we assume the motif $M$ in~\cref{def:RAquark} is node or edge (larger $M$ can be considered as well).
The only difference with respect to Algorithm~\ref{alg:quark} is the way we compute and keep the degrees and process the node/edge in the inner loop.
We first find the set of orbits, $B$, that a node/edge has in $N$ (line~\ref{ln:RAone}).
In line~\ref{ln:RAtwo}, we count the orbit degrees for each node/edge and for all orbits $1,...,|B|$.
Orbit degree $d_o$ of a node $u$ (or edge $e$) is the number of $N$ instances that contain the node $u$ (edge $e$) such that the orbit of $u$ ($e$) is $o$ (\cref{def:RAquark}).
Each orbit degree of a node/edge is processed separately. 
We also keep $|B|$ arrays to keep track of the processed $(u, o)$ (or $(e,o)$) tuples ($o$ is the orbit of node $u$, or edge $e$) (line~\ref{ln:RAthree}).
In lines~\ref{ln:RAfour}-\ref{ln:RAthirteen}, we process all the orbit degrees in non-decreasing order.
Role-aware quark number is assigned for the chosen node/edge (line~\ref{ln:RAfive}) and we find the neighbors of the node/edge in each $N$ to adjust their orbit degrees (lines~\ref{ln:RAsix} to~\ref{ln:RAtwelve}).
At the end, we return role-aware quark numbers for each node; $\mathcal{K}_i(\cdot)$ for $1\le i \le |B|$.

\begin{table*}[!t]
\centering
\small
\caption{\small Directed datasets from various domains. $|V|$, $|E|$, $|E_u|$, and $|E_d|$ are the number of nodes, edges, bidirectional, and unidirectional edges. We also list the number of motifs for each directed triangle (see~\cref{fig:dirmotifs}) and maximum quark numbers in each quark decomposition. Largest motif count and quark number for each network are shown in bold (for quark numbers, $M$ is a vanilla edge, $N$ is a triangle from~\cref{fig:dirmotifs}).}
\vspace{-3ex}
\renewcommand{\tabcolsep}{1pt}
\hspace{-6ex}
\begin{tabular}{|c|r|r|r|r||r|r|r|r|r|r|r||r|r|r|r|r|r|r|}\hline
& \multicolumn{1}{c|}{\multirow{2}{*}{$|V|$}} & \multicolumn{1}{c|}{\multirow{2}{*}{$|E|$}} & \multicolumn{1}{c|}{\multirow{2}{*}{$|E_u|$}} & \multicolumn{1}{c||}{\multirow{2}{*}{$|E_d|$}} & \multicolumn{7}{c||}{\# motifs} & \multicolumn{7}{c|}{maximum quark numbers}  \\
& & & & & \multicolumn{1}{c|}{\ita{cycle}} & \multicolumn{1}{c|}{\ita{acyclic}} & \multicolumn{1}{c|}{\ita{out+}} & \multicolumn{1}{c|}{\ita{in+}} & \multicolumn{1}{c|}{\ita{cycle+}} & \multicolumn{1}{c|}{\ita{cycle++}} & \multicolumn{1}{c||}{\ita{reciprocal}} & \multicolumn{1}{c|}{\ita{cycle}} & \multicolumn{1}{c|}{\ita{acyclic}} & \multicolumn{1}{c|}{\ita{out+}} & \multicolumn{1}{c|}{\ita{in+}} & \multicolumn{1}{c|}{\ita{cycle+}} & \multicolumn{1}{c|}{\ita{cycle++}} & \multicolumn{1}{c|}{\ita{reciprocal}}\\ \hline
\ti{foodweb}	&	128	&	 2.1K	&	31	&	 2.0K	&	70	&	 7.9K	&	91	&	80	&	212	&	75	&	0	&	1	&	8	&	1	&	1	&	1	&	1	&	0		\\ \hline
\ti{EAT}	&	 23.2K	&	 325.0K	&	 20.1K	&	 284.8K	&	 7.4K	&	 295.2K	&	 76.9K	&	 44.8K	&	 25.0K	&	 26.9K	&	 4.1K	&	1	&	5	&	3	&	3	&	2	&	3	&	3		\\ \hline
\ti{emailEuAll}	&	 265.0K	&	 419.0K	&	 54.5K	&	 310.0K	&	 1.0K	&	 44.9K	&	 65.1K	&	 22.4K	&	 15.2K	&	 69.7K	&	 49.0K	&	1	&	4	&	7	&	4	&	2	&	6	&	11		\\ \hline
\ti{cit-HepPh}	&	 34.5K	&	 421.5K	&	657	&	 420.2K	&	65	&	 1.3M	&	 5.4K	&	 5.3K	&	232	&	191	&	18	&	1	&	23	&	2	&	2	&	1	&	2	&	1		\\ \hline
\ti{Slashdot}	&	 77.4K	&	 828.2K	&	 359.0K	&	 110.2K	&	89	&	 9.6K	&	 42.8K	&	 16.4K	&	 10.0K	&	 71.1K	&	 401.9K	&	1	&	3	&	6	&	3	&	2	&	3	&	33		\\ \hline
\ti{web-ND}	&	 325.7K	&	 1.5M	&	 379.6K	&	 710.5K	&	 9.5K	&	 499.2K	&	 309.7K	&	 1.2M	&	 40.6K	&	 106.6K	&	 6.8M	&	1	&	15	&	14	&	11	&	2	&	3	&	148		\\ \hline
\ti{amazon}	&	 403.4K	&	 3.4M	&	 944.0K	&	 1.5M	&	45	&	 632.1K	&	 974.9K	&	 627.1K	&	 58.1K	&	 821.5K	&	 872.8K	&	1	&	5	&	6	&	4	&	2	&	4	&	9		\\ \hline
\ti{wiki-Talk}	&	 2.4M	&	 5.0M	&	 361.8K	&	 4.3M	&	 171.9K	&	 227.7K	&	 1.0M	&	 1.6M	&	 1.1M	&	 2.2M	&	 836.5K	&	2	&	12	&	7	&	7	&	6	&	18	&	18		\\ \hline
\ti{soc-pokec}	&	 1.6M	&	 30.6M	&	 8.3M	&	 14.0M	&	 142.7K	&	 5.0M	&	 4.1M	&	 3.9M	&	 2.1M	&	 10.4M	&	 7.0M	&	2	&	25	&	9	&	5	&	3	&	7	&	18		\\ \hline
\ti{liveJournal}	&	 4.8M	&	 68.5M	&	 25.6M	&	 17.2M	&	 202.3K	&	 58.3M	&	 33.9M	&	 46.5M	&	 6.6M	&	 59.7M	&	 80.6M	&	7	&	133	&	98	&	89	&	27	&	65	&	247		\\ \hline
\ti{en-wiki}	&	 4.2M	&	 101.3M	&	 9.4M	&	 82.6M	&	 2.8M	&	 163.3M	&	 61.4M	&	 34.8M	&	 13.8M	&	 22.1M	&	 5.9M	&	7	&	26	&	22	&	24	&	5	&	18	&	29		\\ \hline
\end{tabular}
\hspace{-6ex}
\label{tab:directeds}
\vspace{-2ex}
\end{table*}

\vspace{-2ex}
\section{Experiments}\label{sec:exps}
We evaluate our framework on three types of networks and motifs therein; directed (\cref{sec:expdirected}), signed-directed (\cref{sec:expsigneddirected}), and node-labeled (\cref{sec:expnodelabeled}) networks.
We implement quark decompositions for various motifs in each type and evaluate the resulting subgraphs.
All experiments are performed on a Linux operating system (v. 4.12.14-150.52) running on a machine with Intel(R) Xeon(R) CPU E5-2698 v3 processor at 2.30GHz with 64 GB DDR3 1866 MHz memory. 
Algorithms are implemented in C++ and compiled using gcc 6.1.0 at the -O2 level.
\textbf{The code is available at \url{http://sariyuce.com/quark\_decomposition.tar}}.
For each network type, we discuss the set of motifs used and present the results.
We compare quark decomposition to the state-of-the-art methods and highlight anecdotal examples to stress the contrast between our method and others.
We also present the runtime performance of quark decompositions and other state-of-the-art methods.

\noindent {\bf Baselines.} We consider three baselines in our comparisons.
\begin{compactitem}[\leftmargin=-.5in$\bullet$]
\item {\bf Motif clustering (\MC)~\cite{Benson16}.} Set of higher-order clustering algorithms by Benson et al. (see~\cref{sec:rel} for details). We consider three versions; (1) \MCsingle: Algorithm that gives a single subgraph with near-optimal motif conductance, (2) \MCrecbi: Recursive bisection algorithm that iteratively finds multiple clusters (starting with the optimal) until the cluster size gets too small (less than 10 nodes) or quality degrades too much (conductance goes above 0.5), (3) \MCkmeans: k-means algorithm that is run on the motif adjacency matrix -- number of clusters (k) must be specified for this version.
\item {\bf Takaguchi and Yoshida (\TY)~\cite{Takaguchi16}.} Cycle-truss and flow-truss algorithms (see~\cref{sec:rel} for details).
\item {\bf (r,s) nucleus~\cite{Sariyuce15}.} Nucleus decomposition to find hierarchical dense subgraphs in undirected networks (see~\cref{sec:rel}).
\end{compactitem}

\noindent {\bf Metrics.} We consider three metrics to measure the quality of the subgraphs. We also show anecdotal examples when feasible.
\begin{compactitem}[\leftmargin=-.5in$\bullet$]
\item {\bf Motif conductance.} Edge conductance is adapted for motifs in~\cite{Benson16}. Motif conductance of a subgraph is defined as the ratio of the number of motif instances cut (i.e., motifs in the boundary) to the number of motif instance end points in the subgraph (i.e., nodes participating in the motifs). The lower values are better.
\item {\bf Average motif degree.} Conductance metric is known to have a bias toward giving better results for smaller numbers of clusters~\cite{almeida2011there, Leskovec08}. As an alternative, we consider the average motif degree in each subgraph, as given in~\cref{def:amd}.
In edge-based clustering literature, the densest subgraph of a graph is defined as the one with the largest average degree~\cite{Goldberg84, Tsourakakis15}. Here we adapt this measure for the motif-based subgraphs and simply consider the number of motifs per node. The higher values are better.
\item {\bf Edge density.} We also consider the ratio of edges over all possible in a subgraph ($|E|/{|V| \choose 2}$). We use this metric in~\cref{sec:expnodelabeled} for undirected networks. The higher values are better.
\end{compactitem}

\vspace{-2ex}
\subsection{Directed networks}\label{sec:expdirected}

{\bf Datasets.} We consider several directed networks from various domains in our experiments: Florida Bay food web (\ti{foodweb}), word associations (\ti{EAT}), emails (\ti{email-EuAll}), citations (\ti{cit-HepPh}), online social networks (\ti{slashdot}, \ti{soc-pokec}, \ti{livejournal},\\ \ti{wiki-Talk}), web networks (\ti{web-ND}, \ti{wiki-Talk}), and product co-purchasing relations (\ti{amazon}).
All networks (except \ti{EAT}~\cite{kiss1973associative}) are obtained from SNAP~\cite{snap}.
~\cref{tab:directeds} gives several statistics, including the motif counts and maximum quark numbers.

\begin{figure*}[!t]
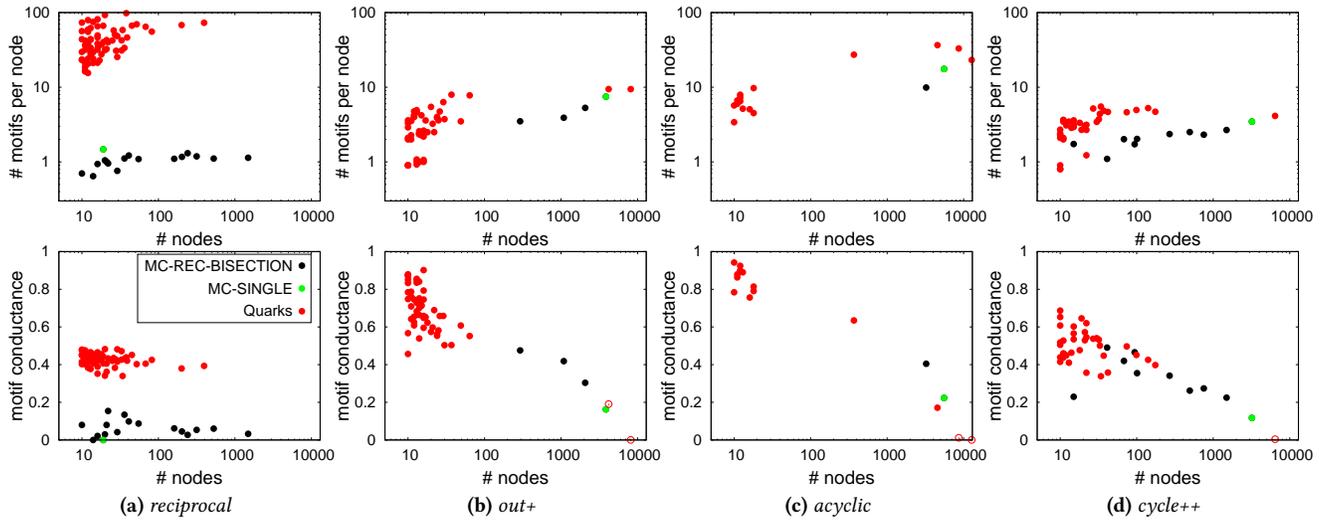

\centering
\begin{subfigure}{.25\textwidth}
	\centering
	\includegraphics[width=\linewidth]{./eat-reciprocal.pdf}
	\vspace{-5ex}
	
	\includegraphics[width=\linewidth]{./cond-eat-reciprocal.pdf}
	\vspace{-4ex}
        \caption{\ita{reciprocal}}
        \label{fig:C1}
\end{subfigure}
\hspace{-2ex}
\begin{subfigure}{.25\textwidth}
	\centering
	\includegraphics[width=\linewidth]{./eat-outp.pdf}
	\vspace{-5ex}
	
	\includegraphics[width=\linewidth]{./cond-eat-outp.pdf}
	\vspace{-4ex}
        \caption{\ita{out+}}
        \label{fig:C4}
\end{subfigure}
\hspace{-2ex}
\begin{subfigure}{.25\textwidth}
	\centering
	\includegraphics[width=\linewidth]{./eat-acyclic.pdf}
	\vspace{-5ex}
	
	\includegraphics[width=\linewidth]{./cond-eat-acyclic.pdf}
	\vspace{-4ex}
        \caption{\ita{acyclic}}
        \label{fig:C3}
\end{subfigure}
\hspace{-2ex}
\begin{subfigure}{.25\textwidth}
	\centering
	\includegraphics[width=\linewidth]{./eat-cyclepp.pdf}
	\vspace{-5ex}
	
	\includegraphics[width=\linewidth]{./cond-eat-cyclepp.pdf}
	\vspace{-4ex}
        \caption{\ita{cycle++}}
        \label{fig:C7}
\end{subfigure}
\vspace{-3ex}
\caption{\small Comparison of the quark decomposition and \MC~\cite{Benson16} in \ti{EAT} network. Top row shows the results for number of motifs per node, i.e., average motif degree, (higher is better) and the bottom row has the motif conductance (lower is better).
We consider \MCsingle, which obtains near-optimal motif conductance, and \MCrecbi, which is applied until the resulting cluster gets too small (less than 10 vertices) or high conductance (more than 0.5).
For quark decomposition, we only show the $k$-quarks with at least 10 nodes. Each subgraph is denoted by a point; the size is shown on the $x$-axis and the metric is given on the $y$-axis. The large quarks for which the conductance computation requires the rest of the graph (since the size is more than the half) are denoted by red circles for completeness (conductances for those are not real).
}
\label{fig:EATcomp}
\vspace{-3ex}
\end{figure*}

\noindent {\bf Motifs.} We instantiate the quark decomposition for directed networks by considering the edge and triangle motifs (\cref{fig:dirmotifs}), corresponding to the $M$ and $N$ in~\cref{def:quark}, respectively.
Note that considering edge as motif $M$ is more advantageous than node. Since the edges are assigned quark numbers, $k$-quarks can overlap with each other.
Also, role confusion does not happen for \ita{out+} and \ita{in+} as explained in~\cref{sec:roleconf}.
We also incorporate the reciprocity by considering the unidirectional and bidirectional edges separately, rather than treating each bidirectional edge as two unidirectional edges.
This is because directed networks often have a significant percentage of bidirectional edges (also observed in~\cref{tab:directeds}) and those need to be treated differently, as discussed in ~\cite{Newman02, Sesh16}.
{\it We use the vanilla edge as the motif $M$ and each of the seven directed triangles (\cref{fig:dirmotifs}) as the motif $N$.}

We first discuss the motif counts and quarks for directed triangles. Then we compare quarks with baselines using the {\ti EAT} and {\ti foodweb} networks. We finish by comparing the runtimes.

\subsubsection{\bf Motif counts and subgraphs}

\cref{tab:directeds} lists the motif counts and maximum quark numbers for each motif.
\ita{cycle} is often the least common motif.
Networks with significant fraction of bidirectional edges tend to contain \ita{reciprocal} motifs the most.
Note that the most frequent motif does not always yield the highest quark number.
In particular, \ita{reciprocals} are often concentrated in a small region, thus yield highly-dense subgraphs.
For instance, \ti{en-wiki} has 5.9M \ita{reciprocal} and 163.3M \ita{acyclic} motifs but the maximum quark numbers for those motifs are 29 and 26, respectively. 
This is because the maximum $k$-quark has 465 edges in the \ita{reciprocal} case but has 9461 edges in \ita{acyclic}.
We also observe that the number of quarks are independent of the motif counts.
For example, the most prevalent motif in \ti{EAT} network is, by far, \ita{acyclic}.
However, \ita{acyclic} yields only 447 $k$-quarks while the \ita{cycle+} locates 2826 subgraphs.
Overall, the abundance of a particular motif does not imply the existence of dense subgraphs containing that motif.

\subsubsection{\bf Comparison with previous methods}

We compare quark decomposition against the three baselines listed above. We use \ti{EAT} network, a collection of word association norms where the nodes are English words and an edge $(u, v)$ implies that human subjects consider the word $v$ when they are shown the word $u$ as stimulus.

We first compare the quality of subgraphs given by the quark decomposition and \MC algorithms~\cite{Benson16} using motif conductance and average motif degree metrics. For quark decomposition, we only consider the $k$-quarks with at least 10 nodes.
For \MC algorithms, we consider \MCsingle and \MCrecbi.
Note that we aim to compare the `best' clusters reported by the two algorithms and do not intend any comparison with respect to the number of clusters reported.
\cref{fig:EATcomp} gives the comparison for \ita{reciprocal}, \ita{out+}, \ita{acyclic}, and \ita{cycle++} motifs (other motifs show similar results and omitted).
For each subgraph, we report the size as well as the quality metric; number of motifs per node in the top row and the motif conductance in the bottom row.
\MC variants often give large subgraphs, always with very low conductance, as expected. \MCsingle (optimal cluster) has more than 1000 nodes for \ita{out+}, \ita{in+}, \ita{acyclic}, and \ita{cycle++}.
For \ita{cycle+}, however, it has only five nodes.
Quarks are often small, most in the range of 10-100 nodes, and have higher conductance scores. For \ita{out+} and \ita{cycle++}, some quarks yield comparable conductance scores with \MCrecbi.
We also observe that a quark for \ita{acylic} has a better conductance than \MCsingle (which is the near-optimal as shown in~\cite{Benson16}).
Regarding the average motif degrees (top row), quarks perform significantly better in all the motifs (note that y-axis is in log-scale). 
By definition of the quark (the connectivity constraint in particular), if the size is $n$, the number of motifs is at least $n-2$  (for $k=1$, a single motif is a valid subgraph and can be extended by a new node that creates a new motif, keeping the motif count $n-2$). This ensures a lower bound, $\frac{n-2}{n}$ (close to 1), for average motif degree in quarks.
In general, quarks tend to be smaller in size when compared to \MC results, have consistently higher average motif degrees, and comparable conductance scores for some motifs.
Overall, the top-down partitioning approach in the motif clustering is likely to result in larger subgraphs in varying quality whereas the bottom-up computation in quark decomposition yields smaller subgraphs with larger average motif degrees.

Next we compare quark decomposition with \TY~\cite{Takaguchi16}.
For \ti{EAT} network, \TY reports maximum cycle-truss number of 3 and maximum flow-truss number of 10. For each maximum truss subgraph, we checked the quarks that are the most similar.
\cref{fig:tycomp} presents the results for cycle-truss, flow-truss, and their corresponding quarks with size and average motif degree information. For cycle- and flow-truss, we calculate the induced cycle and acyclic motif degrees (i.e., bidirectional edges are not included).
Next to each quark, we denote the size of its intersection with the truss.
Various types of quarks are able to obtain almost all the nodes in those trusses.
70 of 77 nodes in cycle-truss are obtained with 15 quarks and all of the 45 nodes in flow-truss are given in 15 other quarks.
This verifies the artificial over-representation of cycle- and flow-trusses due to the non-induced nature.
Overall, treating the bidirectional edges as atomic units enables finding diverse subgraphs while correctly capturing the semantics of pairwise relationships.

\begin{figure}[!b]
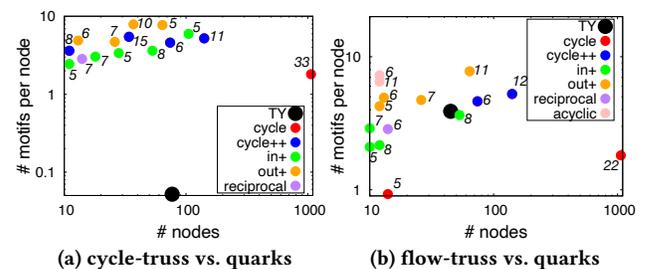

\centering
%\vspace{-2ex}
\begin{subfigure}{.5\linewidth}
	\centering
	\includegraphics[width=\linewidth]{./TYcycle.pdf}
	\vspace{-4ex}
        \caption{cycle-truss vs. quarks}
        \label{fig:ty1}
\end{subfigure}
\hspace{-2ex}
\begin{subfigure}{.5\linewidth}
	\centering
	\includegraphics[width=\linewidth]{./TYflow.pdf}
	\vspace{-4ex}
        \caption{flow-truss vs. quarks}
        \label{fig:ty2}
\end{subfigure}
\vspace{-3ex}
\caption{\small Comparison of cycle- and flow-truss to quarks in \ti{EAT} network. For each quark, size of its intersection with the truss is shown.}
%\vspace{-0.5ex}
\label{fig:tycomp}
\end{figure}

\begin{figure}[!t]
\centering
\includegraphics[width=0.8\linewidth]{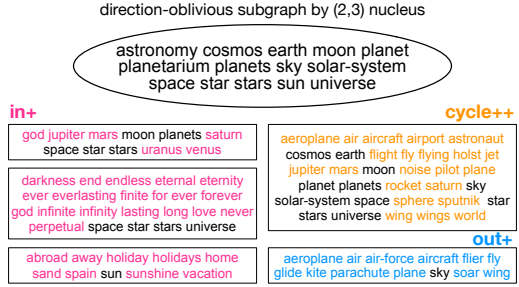}
\vspace{-2ex}
\caption{\small Comparison of a direction-oblivious $(2,3)$ nucleus and various quarks in \ti{EAT} network. The common words in quarks and the nucleus is shown in black. Quarks by different motifs capture different contexts for those words. \ita{in+} provides multiple contexts for \textit{space, stars} thanks to the fact that quarks are overlapping.}
\label{fig:EATexample}
%\vspace{-4ex}
\end{figure}

Lastly, we compare quarks with $(2,3)$ nucleus decomposition~\cite{Sariyuce15}, which ignores edge directions.
We observe that incorporating the edge directions results in more diverse subgraphs.
The number of subgraphs (of any quality) obtained by each quark decomposition is significantly larger than what nucleus decomposition yields.
As an anecdotal example, we show a subgraph found by the nucleus decomposition in~\cref{fig:EATexample}.
The direction-oblivious subgraph contains words related to astronomy and space.
Quark decompositions capture several diverse contexts related to those words.
Thanks to the overlapping quarks, \ita{in+} finds two subgraphs that contain \textit{space} and \textit{stars}: one in astronomy theme (\textit{uranus, venus, ...}) and another in the religious context (\textit{god, eternity, ...}).
It also finds another subgraph in the vacation theme (\textit{sun, holiday, sand, ...}).
\ita{Out+} yields a subgraph in the air-flight context (\textit{sky, aircraft, wing, ...}).
We also recognize that multiple meanings of the homonym words are reflected in various $k$-quarks.
For instance, \textit{lie} is reported in two subgraphs by \ita{cycle++}: one is about incorrectness (\textit{falsehood, untruth, ...}) and the other is about staying at rest in the horizontal position (\textit{couch, rest, ...}).
Overall, quark decompositions by various motifs can locate diverse contexts for a given word thanks to the motif-aware approach and overlapping nature of quarks.

%\vspace{-2ex}
\subsubsection{\bf Analysis of Florida Bay food web}

Here we analyze the structure of Florida Bay food web network ({\ti foodweb}) where the nodes are the compartments (i.e., organisms, species) and the edges are the directed carbon exchanges (i.e., $u\rightarrow v$ if $v$ eats $u$).
Benson et al. showed that high-quality clusters (i.e., with low conductance) by \MC only exist for \ita{out+}, which implies that the organization of compartments is better described with \ita{out+} (as opposed to the common belief that \ita{acyclic} is the key motif)~\cite{Benson16}. They also show that the 4 clusters by \MCkmeans for \ita{out+} reflect the ground-truth subgroup classifications better than the state-of-the-art clustering algorithms such as spectral edge clustering (with k-means and recursive bisection)~\cite{von2007tutorial}, InfoMap~\cite{rosvall2008maps}, and Louvain method~\cite{blondel2008fast}.

We first compare the quarks for \ita{out+} with \MCkmeans.
We find 7 quarks for \ita{out+}, so we consider \MCkmeans with 7 clusters as well as with 4 clusters.
The nodes that appear in multiple quarks are only considered to be part of their largest quark (other choices give similar results).
\cref{tab:foodweb-comparison} presents the results for two ground-truth classifications given in~\cite{ulanowicz1998network, Benson16} by four metrics: Adjusted Rand Index (ARI), F1 score, Normalized Mutual Information (NMI), Purity~\cite{manning2008introduction}.
{\bf Quarks clearly outperforms \MCkmeans variants in both classifications by all the metrics.}
One particular difference is that \MCkmeans considers some macroinvertebrates like predatory crabs among the benthic predators of eels and toadfish whereas quark decomposition finds all macroinvertebrates in the same subgraph.
We believe the main reason is that \MCkmeans considers motif counts from the node-perspective while quark decomposition is based on the edges and their motif counts.

\begin{table}[!t]
\centering
\small
\caption{\small \ti{foodweb} classification results. Best in each row is in bold.}
\vspace{-3ex}
\begin{tabular}{|c|c|c|c|c|} \hline
\multirow{2}{*}{\it{out+}} &	\multirow{2}{*}{Metric}	&	Quarks	&	\MCkmeans	&	\MCkmeans	\\
	&	 	&	(7 subgraphs)	&	w/ 4 clusters	&	w/ 7 clusters	\\ \hline
\parbox[t]{2mm}{\multirow{4}{*}{\rotatebox[origin=c]{90}{Class 1}}} 	&	ARI	&	{\bf 0.3627}	&	0.3005	&	0.1485	\\
	&	F1	&	{\bf 0.4869}	&	0.4574	&	0.3794	\\
	&	NMI	&	{\bf 0.5415}	&	0.5040	&	0.4843	\\
	&	Purity	&	{\bf 0.5968}	&	0.5645	&	0.5161	\\ \hline
\parbox[t]{2mm}{\multirow{4}{*}{\rotatebox[origin=c]{90}{Class 2}}} 	&	ARI	&	{\bf 0.3816}	&	0.3265	&	0.1871	\\
	&	F1	&	{\bf 0.5675}	&	0.5380	&	0.4601	\\
	&	NMI	&	{\bf 0.5206}	&	0.4822	&	0.4309	\\
	&	Purity	&	{\bf 0.6452}	&	0.6129	&	0.5645	\\ \hline\end{tabular}
\label{tab:foodweb-comparison}
%\vspace{-4ex}
\end{table}

\begin{table}[!b]
\centering
%\vspace{-4ex}
%\vspace{1ex}
\small
\caption{\small Runtimes for large directed networks (sec). Best result in each motif column is shown in bold.}
\vspace{-3ex}
\renewcommand{\tabcolsep}{1.4pt}
\begin{tabular}{|l|r:r|r:r|r:r|r:r|r:r|r:r|} \hline
 &  \multicolumn{2}{c|}{\ita{cycle}} & \multicolumn{2}{c|}\ita{{acyclic}} & \multicolumn{2}{c|}{\ita{out+}} & \multicolumn{2}{c|}{\ita{in+}} & \multicolumn{2}{c|}{\ita{cycle+}} & \multicolumn{2}{c|}{\ita{cycle++}}  \\ 
 &  \multicolumn{1}{c:}{Q    } &  \multicolumn{1}{c|}{M    }  &  \multicolumn{1}{c:}{Q    } &  \multicolumn{1}{c|}{M    } &  \multicolumn{1}{c:}{Q    } &  \multicolumn{1}{c|}{M    } &  \multicolumn{1}{c:}{Q    } &  \multicolumn{1}{c|}{M    } &  \multicolumn{1}{c:}{Q    } &  \multicolumn{1}{c|}{M    } &  \multicolumn{1}{c:}{Q    } &  \multicolumn{1}{c|}{M    } \\ \hline
 \ti{web-ND}	&	{\bf 0.34}	&	3.31	&	{\bf 4.26}	&	16.8	&	{\bf 0.62}	&	6.3	&	{\bf 2.11}	&	8.54	&	{\bf 0.53}	&	10.01	&	{\bf 0.78}	&	9.86			\\
\ti{amzn}	&	{\bf 0.74}	&	3.54	&	{\bf 3.29}	&	79	&	{\bf 2.25}	&	132	&	{\bf 1.92}	&	105	&	{\bf 1.18}	&	5.29	&	{\bf 3.23}	&	107		\\
\ti{wiki}	&	28.9	&	{\bf 14.0}	&	112	&	{\bf 18.2}	&	{\bf 10.9}	&	16.4	&	21.1	&	{\bf 17.7}	&	20.5	&	{\bf 20.2}	&	47.8	&	{\bf 16.8}		\\
\ti{soc-p}	&	{\bf 23.6}	&	79	&	{\bf 66.9}	&	99	&	{\bf 37.0}	&	119	&	{\bf 34.2}	&	139	&	{\bf 48.9}	&	129	&	{\bf 98.1}	&	128		\\
\ti{live-j}	&	{\bf 37.4}	&	200	&	{\bf 180}	&	943	&	{\bf 118}	&	1135	&	{\bf 126}	&	1438	&	{\bf 112}	&	828	&	{\bf 289}	&	2248	\\
\ti{en-w}	&	900	&	{\bf 501}	&	7746	&	{\bf 864}	&	1511	&	{\bf 799}	&	1709	&	{\bf 677}	&	{\bf 398}	&	724	&	2223	&	{\bf 677}		\\ \hline
\end{tabular}
\label{tab:dirtimes}
\end{table}

We also consider \ita{acyclic} and use role-aware quark decomposition (Algorithm~\ref{alg:RAquark}) to determine the roles of the compartments in the resulting quarks.
For an acyclic formed by $u$$\rightarrow$$v$, $u$$\rightarrow$$w$, and $v$$\rightarrow$$w$, we define $u$ as the {\it prey} orbit, $v$ as the {\it balancer} orbit, and $w$ as the {\it predator} orbit. 
The maximum quark obtained by {\sc QuarkDec} (Algorithm~\ref{alg:quark}) and {\sc RA-QuarkDec} (Algorithm~\ref{alg:RAquark}) is the same and contains 48 compartments.
{\sc RA-QuarkDec} assigns three quark numbers for each edge, corresponding to the three edge orbits in \ita{acyclic} (each is a union of its nodes' orbits).
We determine the role profile for each node based on the quark numbers of its outgoing and incoming edges.
The compartments that are dominantly {\it predator} are the birds including predatory ducks, big herons \& egrets, greeb, and more.
Dominantly {\it prey} compartments include clown goby, four types of zooplankton microfauna, and seven macroinvertebrates (including pink and herbivorous shrimps).
Lastly, the ones that have the {\it balancer} role are the fishes, such as (bay) anchovy, sardines, and mojarra.
Note that it is not possible to understand the roles of the compartments by {\sc QuarkDec} since each edge has a single quark number. For instance, the average quark numbers of incoming and outgoing edges for predatory ducks and code goby are very close, which tells nothing about their roles. 

\subsubsection{\textbf{Runtime performance.}}
We measure the runtime for quark decomposition and \MCsingle on all directed networks,~\cref{tab:dirtimes} lists the results for large networks.
\MCsingle (denoted M) gives only one near-optimal cluster and quark decomposition (denoted Q) finds all the quark numbers.
Quark decomposition is faster than \MCsingle for all motifs in \ti{web-ND}, \ti{amazon}, \ti{soc-pokec}, and \ti{liveJournal}.
For some configurations, such as \ita{in+} in \ti{liveJournal}, we observe up to 10x speedup.
For \ti{en-wiki}, however, \MCsingle is faster for all motifs. Note that \MCsingle runtime includes motif adjacency construction and spectral clustering to find one cluster. In order to find more clusters, spectral clustering needs to be run again. However, the spectral clustering takes 36\% of the total time for \ti{en-wiki} (on avg.), hence obtaining 10 clusters will increase the runtime by 4x. 
All in all, although \MCsingle finds only one cluster, quark decomposition is faster for most networks and motifs, and a better choice especially when multiple subgraphs are targeted.

\subsection{Signed-directed networks}\label{sec:expsigneddirected}

{\bf Datasets.} We use signed-directed networks that have categorical labels on edges to denote one-sided positive/negative relationships (no bidirectional edges).
We have two Reddit hyperlink networks that have directed connections among the subreddits.
\ti{reddit-body} and \ti{reddit-title} consider the positive and negative interactions among users who belong to different subreddits~\cite{Kumar18}.
We considered the last interactions in the datasets.
We also consider \ti{epinions}, a who-trust-whom social network~\cite{Guha04},
and \ti{slashdot} which contains the self-tagged friend/foe relationships~\cite{Kunegis09}.
All are obtained from SNAP~\cite{snap}.
\cref{tab:signeds} gives the number of positive and negative edges, motif counts, and maximum quark numbers.

\noindent {\bf Motifs.} We use edge and triangle motifs, corresponding to the $M$ and $N$ in~\cref{def:quark}, respectively.
This also ensures that quarks can overlap with each other.
There is no bidirectional edge and there are twelve possible triangle motifs in total, as shown in~\cref{fig:signedmotifs}; four \ita{cycle} motifs since there is a single orbit and eight \ita{acyclic} motifs where each \texttt{++-} and \texttt{+--} appears in three different ways.
{\it We use the vanilla edge as $M$ and each of the twelve triangles (\cref{fig:signedmotifs}) as $N$.}

\begin{figure}[!t]
\centering
\hspace{-1ex}
\includegraphics[width=1\linewidth]{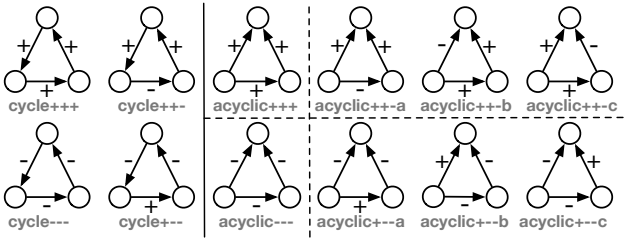}
\hspace{-1ex}
\vspace{-3ex}
\caption{\small Signed directed triangle motifs.}
\label{fig:signedmotifs}
%\vspace{-2ex}
\end{figure}

\begin{figure}[!b]
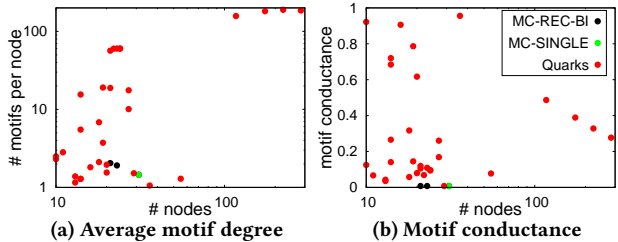

%\vspace{-5ex}
\centering
\begin{subfigure}{.5\linewidth}
	\centering
	\includegraphics[width=\linewidth]{./PPP-br-acyclic.pdf}
	\vspace{-6ex}
        \caption{Average motif degree}
        \label{fig:bra}
\end{subfigure}
\hspace{-2ex}
\begin{subfigure}{.5\linewidth}
	\centering
	\includegraphics[width=\linewidth]{./cond-PPP-br-acyclic.pdf}
	\vspace{-6ex}
        \caption{Motif conductance}
        \label{fig:brc}
\end{subfigure}
\vspace{-3ex}
\caption{\small Comparison of quarks and \MC~\cite{Benson16} for \ita{acyclic}{\tt +++} in \ti{reddit-body} network with respect to the number of motifs per node (left, higher is better) and motif conductance (right, lower is better). Subgraphs with at least 10 nodes are shown. Each subgraph is denoted by a point; size is on the x-axis and metric is on the y-axis.}
\label{fig:rba}

\end{figure}

\subsubsection{\bf Motif counts and quark numbers}

Acyclic variants are significantly more common than the cycles in all networks. 
Among the cycle variants, \texttt{+++} is the most prevalent in \ti{reddit} networks and \texttt{---} is the least common in all.
This is also coherent with the structural balance theory~\cite{Easley12}, which states that the triangles with an odd number of negative links are rare.
However, \ita{cycle}\texttt{++-} is more common than the other \ita{cycles} in the \ti{epinions} network.
This might be due to hierarchical status among the nodes; the lower status nodes are likely to trust the ones with higher status but the reverse is not true.
The ratio of balanced triangles is 0.8 for \ti{reddit} networks but 0.42 for \ti{epinions}.
Per the maximum quark numbers, we observe a correlation with the motif counts.
Among all, only \ita{acyclic}\texttt{+++} yields non-trivial subgraphs with large quark numbers.
\ita{Cycle} variants fail to give significant quarks.

\begin{table}
\centering
\small
\caption{\small Signed datasets. $|V|$, $|E|$, $|E_+|$, and $|E_-|$ are the number of nodes, edges, positive, and negative edges. Motif counts (see~\cref{fig:signedmotifs}) and corresponding maximum quark numbers are also shown.}
\vspace{-3ex}
\renewcommand{\tabcolsep}{2pt}
\begin{tabular}{|c|c|l|r|r|r|r|r|} \cline{1-7}
\multicolumn{3}{|c|}{} 	&	\ti{red-body}	&	\ti{red-title}	&	\ti{epinions}	&	\ti{slashdot}	\\	 	\cline{1-7}
\multicolumn{3}{|c|}{$|V|$}	&	 34.7K	&	 52.9K	&	 125.8K	&	 74.3K	 	\\	\cline{1-7}
\multicolumn{3}{|c|}{$|E|$}	&	 110.8K	&	 205.5K	&	 581.6K	&	 420.5K		\\	\cline{1-7}
\multicolumn{3}{|c|}{$|E_+|$}	&	 102.5K	&	 188.7K	&	 465.4K	&	 311.3K		\\	\cline{1-7}
\multicolumn{3}{|c|}{$|E_-|$}	&	 8.3K	&	 16.8K	&	 116.3K	&	 109.2K		\\	\cline{1-7}
&		&$	+++	$	&	 4.9K	&	 7.8K	&	 14.1K	&	 1.2K		\\	
&		&$	++-	$	&	 1.3K	&	 1.9K	&	 34.2K	&	 1.3K		\\	
&	\ita{cycle}	&$	+--	$	&	166	&	233	&	 10.9K	&	702		\\	
&		&$	---	$	&	9	&	13	&	745	&	86		\\	\cline{2-7}
&		&$	+++	$	&	 145.7K	&	 592.9K	&	 1.1M	&	 125.0K		\\	
motif &		&$	++-a	$	&	 20.0K	&	 88.8K	&	 37.1K	&	 15.3K		\\	
counts &		&$	++-b	$	&	 22.1K	&	 88.8K	&	 14.1K	&	 9.0K		\\	
&	\ita{acylic}	&$	++-c	$	&	 18.0K	&	 59.3K	&	 115.0K	&	 15.7K		\\	
&		&$	+--a	$	&	 3.4K	&	 10.8K	&	 88.5K	&	 11.8K		\\	
&		&$	+--b	$	&	 3.1K	&	 10.8K	&	 8.3K	&	 7.8K		\\	
&		&$	+--c	$	&	 6.2K	&	 26.6K	&	 92.1K	&	 30.4K		\\	
&		&$	---	$	&	 1.1K	&	 3.5K	&	 39.7K	&	 9.0K		\\	\hline \hline
max. & \ita{cycle}		&    all	   	&   	1	   &   	1-2	   &   	1-2	   &   	1	   	\\	\cline{2-7}
quark & \ita{acyclic}		&$	+++	$	&   	15	   &   	20	   &   	15	   &   	6	   	\\	
numbers &		&   	others	   	&   	2-3	   &   2-4	   &   	2-5	   &   	2-5	   	\\		\hline
\end{tabular}
\label{tab:signeds}
%\vspace{-4ex}
\end{table}

\subsubsection{\bf Comparison with \MC}
We compare the $k$-quarks with the \MCsingle and \MCrecbi for \ita{acyclic}{\tt +++} motif in \cref{fig:rba}.
Some quarks are able to obtain very low conductance scores, close to \MC results.
For the average motif degrees, quarks significantly outperform \MCrecbi.
In particular, one quark has 284 nodes with average motif degree of 185.2.

One of the quarks by \ita{acyclic}\texttt{+++} has 21 subreddits about rappers/singers, such as \textit{kanye} and \textit{kendricklamar}.
The ones which praised the others but have not received much praise (the white node in \ita{acyclic} in~\cref{fig:dirmotifs}) are \textit{boogalized, runthejewels,} and ~\textit{charlieputh}. The last two are a young rapper duo and a new Canadian singer, respectively, with 6.6K and 279 members.
On the other hand, the subreddits that got praised by the others but have not reciprocated (the black node in \ita{acyclic}) are \textit{theweeknd, frankocean,} and \textit{kidcudi}.
Those are experienced ones (active since 2010, 2005, and 2003) with tens of thousands of members in their subreddits.

\vspace{-1ex}
\subsection{Node-labeled networks}\label{sec:expnodelabeled}

{\bf Datasets.} Here we consider node-labeled undirected networks. We use the Facebok100 dataset that contains the complete Facebook networks of 100 American colleges from a single-day snapshot in September 2005~\cite{traud2012social}. Each node has multiple labels, here we only consider the genders of the nodes (there are only two available in the dataset; female and male) and use quark decomposition to find subgraphs that have balanced gender ratios, i.e., close number of females and males.
Excluding the female-only institutions, the overall average female ratio is $\%48.5$ and there are 57 networks with less than $50\%$ female.
We choose 18 networks with the lowest female ratio (all have $<45\%$),~\cref{tab:femaleratios} gives a partial list.

\begin{table}[!t]
\centering
\small
\caption{\small Node-labeled graphs. Statistics and avg. of the female ratios in nuclei and quarks are given. $V_f$ denotes the set of female nodes. (2,3)n and (3,4)n are the nucleus decompositions.}
\vspace{-3ex}
\renewcommand{\tabcolsep}{0.5pt}
\begin{tabular}{|l|r|r|c||c|c|c|c|c|c|c|} \hline
	&	\multicolumn{1}{c|}{\multirow{3}{*}{$|V|$}} 	&	\multicolumn{1}{c|}{\multirow{3}{*}{$|E|$}}	&	\multicolumn{1}{c||}{\multirow{3}{*}{$\frac{|V_f|}{|V|}$}} 	&	\multicolumn{3}{c|}{edge, triangle}	&	\multicolumn{4}{c|}{triangle, 4-clique}					\\						
	&		&		&		&	\multicolumn{1}{c|}{\multirow{2}{*}{(2,3)n}}	&	 \multicolumn{2}{c|}{Quarks} 	&	\multirow{2}{*}{(3,4)n}	&	\multicolumn{3}{c|}{Quarks}	\\	
	&		&		&		&		&	{\tt FMM}	&	{\tt FFM}	& 		&	{\tt FMMM}	&	{\tt FFMM}	&	{\tt FFFM}	\\ \hline	
{\ti Mich67}	&	 3.7K	&	 81.9K	&	25\%	&	23.0\%	&	45.0\%	&	50.0\%	&	24.5\%	&	40.0\%	&	45.0\%	&	51.6\%	\\
{\ti Caltech36}	&	769	&	 16.7K	&	30\%	&	39.4\%	&	46.0\%	&	52.0\%	&	38.5\%	&	43.1\%	&	50.2\%	&	52.8\%	\\
{\ti Carnegie49}	&	 6.6K	&	 250.0K	&	37\%	&	32.6\%	&	49.0\%	&	52.5\%	&	38.5\%	&	43.5\%	&	49.5\%	&	54.9\%	\\
{\ti MIT8}	&	 6.4K	&	 251.3K	&	37\%	&	38.8\%	&	48.0\%	&	52.1\%	&	42.0\%	&	44.3\%	&	50.3\%	&	53.9\%	\\
{\ti Stanford3}	&	 11.6K	&	 568.3K	&	40\%	&	46.8\%	&	48.1\%	&	49.0\%	&	44.1\%	&	45.4\%	&	49.2\%	&	55.4\%	\\
{\ti Cornell5}	&	 18.7K	&	 790.8K	&	44\%	&	44.3\%	&	47.6\%	&	51.8\%	&	45.6\%	&	43.7\%	&	48.7\%	&	54.9\%	\\
{\ti Penn94}	&	 41.6K	&	 1.4M	&	44\%	&	49.7\%	&	48.4\%	&	51.4\%	&	52.1\%	&	44.0\%	&	49.8\%	&	55.8\%	\\
{\ti UPenn7}	&	 14.9K	&	 686.5K	&	44\%	&	37.3\%	&	48.8\%	&	51.1\%	&	46.4\%	&	45.1\%	&	50.4\%	&	55.4\%	\\ \hline
\multicolumn{3}{|c|}{\bf Average of 18 networks:}	&					40\%	&	42.5\%	&	48.2\%	&	51.5\%	&	44.1\%	&	44.4\%	&	49.7\%	&	54.7\%	\\ \hline
\end{tabular}
\label{tab:femaleratios}
%\vspace{-4ex}
\end{table}

\noindent {\bf Motifs.} We instantiate the quark decomposition in five ways, where {\tt F/M} denotes the female/male nodes:
(1) $M$ is vanilla edge and $N$ is triangle in the following two forms: {\tt FMM} and {\tt FFM}; and
(2) $M$ is vanilla triangle, $N$ is four-clique in the following three forms: {\tt FMMM}, {\tt FFMM}, and {\tt FFFM}.
Also, there is no role confusion for any variant since the graph is undirected and node labels ensure that an edge cannot serve in different roles in its triangles in (1) (likewise for (2)).

\subsubsection{\bf Finding gender balanced subgraphs}

Algorithmic fairness is one of the most important problems in today's automated world~\cite{kearns2019ethical}.
Algorithms can amplify the implicit bias in the data, particularly based on the protected attributes like gender, race, ethnicity, and this can lead to unwanted consequences in criminal justice system, hiring, credit scoring, and more~\cite{barocas2017fairness}.
The bias in the network data is more complicated; regarding the gender attribute, for example, the problem is not only the imbalanced gender distribution but also how each gender category is connected to the other categories.
There are a few studies that analyze the implications on information diffusion~\cite{stoica2019fairness, Zeinab20}.
In this context, the community structure in the network plays an important role and algorithms that do not actively consider the protected attributes are likely to fail getting fair results.
Algorithms that can find balanced communities even when there is an imbalance in the input network are essential.
Here we use quark decomposition to find subgraphs with balanced gender ratios.
As explained above, we set the input motif $N$ in ways to reflect the characteristics of gender balanced subgraphs.

~\cref{tab:femaleratios} gives the female ratios in quarks in comparison to the label-oblivious nucleus decomposition algorithms~\cite{Sariyuce15}.
For each network, we find the leaf quarks (\cref{def:hier}) and nuclei with at least 10 nodes, calculate the female ratio in each quark or nucleus, and then take the average of those ratios.
We also show the average ratios across all the 18 datasets at the bottom.
The average female ratio of all the networks is $40.3\%$.
$(2,3)$-nuclei have a bit better number, $42.5\%$.
Quarks for {\tt FMM} get $48.2\%$ and the ratios are consistently good for all the networks, varying between $45\%-51.3\%$.
This significant jump from $(2,3)$-nuclei is due to the fact that each edge {\it has to} participate in a number of {\tt FMM} triangles.
Quarks for {\tt FFM} give an even better ratio, $51.5\%$.
$(3,4)$-nuclei are larger in number and also denser than $(2,3)$.
Its average female ratio ($44.1\%$) is also a bit better than $(2,3)$-nuclei.
Quarks for {\tt FMMM} are very close to $(3,4)$-nuclei but more consistent. 
{\tt FFMM} achieves $49.7\%$ and {\tt FFFM} gets the best: $54.7\%$.
Note that results get better for the motifs with larger female ratio: {\tt FMMM} < {\tt FMM} < {\tt FFMM} < {\tt FFM} < {\tt FFFM}.

Quarks cannot provide a good theoretical lower bound for the female ratio since there is no size constraint in the quark definition. For instance, a 1-quark for {\tt FFM} can possibly be formed by a pair of connected female nodes and $n$ male nodes that are connected to both females; the female ratio would be $2/(n+2)$ in this case.
But in practice, quarks with female-dominant motifs yield dense subgraphs with high female ratios. Even with {\tt FMM}, which implies a 1/3 ratio (smaller than the average female ratio of the datasets), quarks can obtain better results than the label-oblivious (2,3)-nuclei.

\begin{figure}[!t]
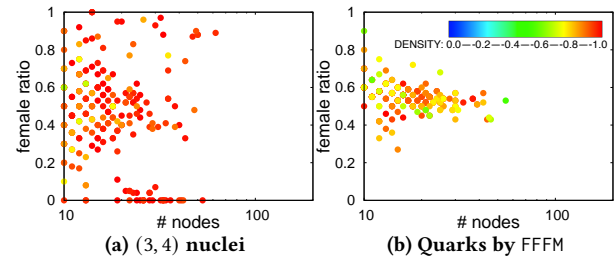

\centering
%\vspace{-1.5ex}
\hspace{-6ex}
\begin{subfigure}{\linewidth}
	\centering
	\hfill \includegraphics[width=0.35\linewidth]{./legend.pdf}
	\vspace{-10ex}
\end{subfigure}
\begin{subfigure}{.5\linewidth}
	\centering
	\includegraphics[width=\linewidth]{./UPenn7-XXX_forgp_FR.pdf}
	\vspace{-5ex}
        \caption{$(3,4)$ nuclei}
        \label{fig:upenn34}
\end{subfigure}
\hspace{-3ex}
\begin{subfigure}{.5\linewidth}
	\centering
	\includegraphics[width=\linewidth]{./UPenn7-FFFM_forgp_FR.pdf}
	\vspace{-5ex}
        \caption{Quarks by {\tt FFFM}}
        \label{fig:upennFFFM}
\end{subfigure}
\vspace{-3ex}
\caption{\small Comparison of the female ratios in subgraphs obtained by quark ({\tt FFFM}) and $(3,4)$-nucleus decompositions for {\ti UPenn7}. For each subgraph, size is given on the x-axis, female ratio is shown on the y-axis, and the edge density is color-coded.}
\label{fig:upenn}
%\vspace{-4ex}
\end{figure}

We also show all the subgraphs with size, edge density, and female ratio information obtained by quark and nucleus decompositions in {\ti UPenn7} network.
~\cref{fig:upenn} gives $(3,4)$-nuclei and quarks for {\tt FFFM} on {\ti UPenn7} network.
Quarks are consistently gender balanced when compared to the nuclei; no quark with less than $25\%$ female ratio exists.
Note that there is a bit degradation in the number and density of the quarks for {\tt FFFM}: 216 subgraphs with 0.88 avg. edge density, compared to the 230 $(3,4)$ nuclei with avg. density 0.94.
Given the consistently high female ratios, we believe that this is an affordable loss in quality. 

%\vspace{-2ex}
\section{Discussion}

Quark decomposition offers a principled approach for motif-driven dense subgraph discovery in heterogeneous networks by successfully regularizing the motif degrees to quark numbers.
Our evaluation shows that the $k$-quarks can find dense subgraphs according to a given motif.
Role-aware variant solves the role confusion problem by creating multiple quark numbers for each motif $M$.
Overall, quark decomposition is versatile, efficient, and extendible.

For future work, it would be interesting to investigate the other byproducts of the quark decomposition, such as hierarchy structure. Our initial results show limited success; detailed and meaningful hierarchies are rare for the most motifs. Theoretical and empirical analysis of the impact of the input motifs, $M, N$, on the hierarchy structure would be interesting .
Also, adapting the quark decomposition for numerical attributes on nodes/edges would be promising.

\vspace{-1ex}
\begin{acks}
This research was supported by NSF-1910063 award, JP Morgan Chase and Company Faculty Research Award, and used resources of the Center for Computational Research at the University at Buffalo~\cite{CCR} and the National Energy Research Scientific Computing Center, a DOE Office of Science User Facility supported by the Office of Science of the U.S. Department of Energy under Contract No. DE-AC02-05CH11231.
\end{acks}

\bibliographystyle{ACM-Reference-Format}
\bibliography{paper}

\clearpage
\newpage

\end{document}